\numberwithin{equation}{section}
\theoremstyle{plain}
\newtheorem{theorem}{Theorem}[section]
\newtheorem{lemma}[theorem]{Lemma}
\newtheorem{corollary}[theorem]{Corollary}
\DeclareMathAlphabet\scr{U}{scr}{m}{n}
\SetMathAlphabet\scr{bold}{U}{scr}{b}{n}
  \DeclareFontFamily{U}{scr}{\skewchar\font'177}
  \DeclareFontShape{U}{scr}{m}{n}{<-6>rsfs5<6-8>rsfs7<8->rsfs10}{}
  \DeclareFontShape{U}{scr}{b}{n}{<-6>rsfs5<6-8>rsfs7<8->rsfs10}{}
\theoremstyle{definition}
\newtheorem{remark}[theorem]{Remark}
\newtheorem{assumption}[theorem]{Assumption}
\def\E{\mathbb{E}}
\def\R{\mathbb{R}}
\def\Hp{\mathbb{H}}
\def\sign{\text{sign}}
\title{ Deep Learning Algorithms for Hedging with Frictions\thanks{The authors thank the two anonymous reviewers for their careful readings and suggestions, and thank Lukas Gonon, Jiequn Han, Ruimeng Hu, Steven Kou, Johannes Muhle-Karbe, Junru Shao, Mete Soner, and Xunyu Zhou for useful comments and fruitful discussions. Part of the work is supported by the MA Mentored Research Program in Statistics at Columbia University. }}
\date{}
\author{
Xiaofei Shi\thanks{ University of Toronto, Department of Statistical Sciences, email \url{xf.shi@utoronto.ca}.}
\and
Daran Xu \thanks{Columbia University, Department of Statistics, email \url{dx2207@columbia.edu}.}
\and
Zhanhao Zhang \thanks{Columbia University, Department of Statistics, email \url{zz2760@columbia.edu}.}
}
\begin{document}
\maketitle

\begin{abstract}
This work studies the deep learning-based numerical algorithms for optimal hedging problems in markets with general convex transaction costs on the trading rates, focusing on their scalability of trading time horizon. Based on the comparison results of the FBSDE solver by Han, Jentzen, and E (2018) and the Deep Hedging algorithm by Buehler, Gonon, Teichmann, and Wood (2019), we propose a Stable Transfer Hedging (ST-Hedging) algorithm, to aggregate the convenience of the leading-order approximation formulas and the accuracy of the deep learning-based algorithms. Our ST-Hedging algorithm achieves the same state-of-the-art performance  in short  and moderately long time horizon as FBSDE solver and Deep Hedging, and generalize well to long time horizon when previous algorithms become suboptimal. With the transfer learning technique, ST-Hedging drastically reduce the training time, and  shows great scalability to high-dimensional settings. This opens up new possibilities in model-based deep learning algorithms in economics, finance, and operational research, which takes advantages of the domain expert knowledge and the accuracy of the learning-based methods. 
\end{abstract}

\section{Introduction}

As observed in many empirical papers, markets are imperfect, meaning that arbitrary quantities cannot be traded immediately at the quoted market price because of taxes, regulations, and the limited liquidity of the assets. Typical examples include linear transaction taxes as well as fixed transaction costs. As reported and  studied in~\cite{almgren.03,almgren.chriss.01,almgren.al.05,lillo.al.03}, empirical estimates of actual transaction costs typically correspond to a 3/2-th power of the order flow.  Accordingly, the large trading volume quickly impacts the market liquidity, which, in turn, drastically changes the agents' behaviors. Hence, optimally scheduling the order flow in anticipation of market liquidity shortage is crucial. 

There is a large body of literature on optimal execution strategies as well as dynamic portfolio optimization models with market illiquidity and price impacts, see, e.g.~\cite{amihud.al.06,bouchaud.al.12,dumas.luciano.91,janecek.shreve.10, kohlmann.tang.02, liu2004optimal,shreve1994optimal}, and with recent results in~\cite{bank.al.17,garleanu.pedersen.13, guasoni.weber.18,soner.touzi.13}. 
By assuming the transaction costs as a quadratic function of the size of the order flow, the optimal trading policy can be given in a closed-form, as shown in~\cite{garleanu.pedersen.13,kohlmann.tang.02}. 
However, no closed-form solution is available under general nonlinear transaction costs with general market dynamics.   
To obtain tractable results, researchers then focus on the  \emph{small} costs limit, as in~\cite{almgren.li.16,bayraktar.al.18, guasoni.weber.18,kallsen.muhlekarbe.17,moreau.al.17,shreve1994optimal,soner.touzi.13}. These elegant asymptotic formulas were proved 
rigorously by~\cite{Ahrens2015OnUS,herdegen2018stability,kallsen2013portfolio,shreve1994optimal}. 
Two important follow-up questions therefore arise. 
The first one is about the quality of the leading-order approximation, which needs to be well-studied and quantified, especially in empirical examples. The second is about the smallness assumption, whether that is an absolute or a relative quantity. 

From the numerical perspectives, recent advances in highly accurate machine learning models have introduced powerful new tools for studying high-dimensional optimization problems, such as hedging with frictions.
The \emph{FBSDE solver}, developed by Han, Jentzen, and E in~\cite{han.al.17}, can solve a dynamic portfolio optimization problem by  finding the solution to a BSDE system. 
For the first time, this algorithm overcame the curse of dimensionality in numerical solutions to high-dimensional SDE and associated PDE, as pointed out by~\cite{beck2020overview,Grohs2018}. The convergence analysis is established by~\cite{han2020convergence} under the same short-term existence assumptions in~\cite{delarue2002existence}. 
At a higher level, the FBSDE solvers find the solution of the FBSDEs through a \emph{supervised learning} framework, since the accuracy of the terminal value of the backward components is served as the goal functional of the algorithm.  
 However, this algorithm does not scale well with the trading time horizon and the time discretization. For calibrated trading parameters as in~\cite{gonon2021asset}, the time horizon for the algorithm to work is required to be unreasonably small. 
In the meantime, with the development of modern model-free techniques, \emph{reinforcement learning} algorithms are also widely used in single-agent optimization problems. Indeed, as shown in the groundbreaking  \emph{Deep Hedging} algorithm~\cite{buehler2019deep,buehler2019deepa}, and similar works in the same flavor~\cite{becker2019deep,casgrain2019deep,han2016deep,hu2019deep, hure2018deep, min2021signatured, Moallemi2021RL,sun2020optimizing,reppen2020bias,ruf2021hedging}, we treat the utility functions as targets and directly parametrize and learn the optimal trading policy. 
Moreover, reinforcement learning frameworks are introduced and analyzed rigorously in ~\cite{wang2020reinforcement, wang2020continuous}. 
Therefore, just like in~\cite{Moallemi2021RL}, a natural question is how these methods compare in practice, and our goal is to understand and compare the two types of methods, especially the advantages and drawbacks of supervised learning and reinforcement learning algorithms.

This paper studies the optimal hedging problem in frictional markets in a general setting and examine the machine learning-based numerical algorithms. 
First, we explore popular machine learning architectures, including the FBSDE solver and the Deep Hedging algorithms. 
We implement both methods, document the tuning procedures, and discuss their advantages and disadvantages.  
Then we compare our numerical results with the leading-order approximation formulas. 
These comparisons help justify the accuracy of the approximation trading formulas and provide us with ideas on the usage of each method in practice. 
In summary,  the FBSDE solver only performs well under short trading horizons, while it fails to work in intermediate trading horizons. 
The Deep Hedging algorithm has stable and reliable performance for short and moderately long trading horizons, but it does not scale well when the trading horizon becomes long. 
In contrast, the leading-order approximations work well under long trading horizons, while it deviates from the ground truth by a significant amount shortly before maturity. 
To fully utilize the convenience of the leading-order formulas and the accuracy of the learning-based algorithms, we employ the \emph{transfer learning} ideas and 
propose a new algorithm which we referred to as \emph{Stable Transfer Hedging} (ST-Hedging) algorithm. 
ST-Hedging algorithm achieves the same state-of-the-art performance in short time horizons by the FBSDE solver and in intermediate time horizons by the Deep Hedging algorithms, and 
it allows us to directly work with long trading horizons. 
ST-Hedging also shows great scalability for multiple stocks with correlations. 
In addition, the convenience of implementation and convergence speed of ST-Hedging algorithm outperforms the FBSDE solver and the Deep Hedging algorithms, which shows the great potential of model-based learning algorithms. 

The rest of the paper is organized as follows. First, in Section~\ref{sec:model}, we introduce the market model with frictions, the preference for individual agents, and the admissible strategies. Then, 
the FBSDE solver, Deep Hedging and ST-Hedging algorithms are introduced in Section~\ref{ML}, with details on the implementations and comparisons. 
Finally, we compare the  performance of different learning-based algorithms and the
leading-order approximation formula in Section~\ref{results}. 
For better readability, the derivation of leading-order approximations and all proofs are collected in Appendix~\ref{sec:asymptotic}.

\paragraph{Notation.}
We fix a filtered probability space $(\Omega,\mathscr{F},(\mathscr{F}_t)_{t \in [0,T]},\mathbb{P})$ with finite time horizon $T>0$, where the filtration is generated by a $1$-dimensional standard Brownian motion $W=(W_t)_{t \in [0,T]}$. 

\section{Model Setup}\label{sec:model}
Here, we 
assume the randomness in the market is generated by the
$1$-dimensional Brownian motion $(W_t)_{t \in [0,T]}$. Suppose we receive a cumulative random endowment as 
\begin{align*}
d\zeta_t =\xi_t dW_t, \quad \mbox{for a general process $\xi \in \Hp^2$.}
\end{align*}
There are two assets in the market for us to hedge the risk, 
the first one is safe, with price normalized to one, and the second is risky, with general dynamic
\begin{align}\label{eq:dyn2}
dS_t=\mu_t dt+\sigma_t dW_t,
\end{align}
where the expected return process satisfies the no-arbitrage condition $\mu= \sigma\kappa$ for a market price of risk $\kappa\in\Hp^2$. One readily verifies that the Bachelier models and Geometric Brownian motions satisfy this requirement. 

\paragraph{Transaction costs. }


A popular class of models originating from the optimal execution literature focuses on absolutely continuous trading strategies, cf.~\cite{almgren.03,almgren.chriss.01},
\begin{align*}
\varphi_t=\varphi_{0-}+\int_0^t \dot{\varphi}_u du, \quad t \geq 0,
\end{align*}
and penalizes the trading rate $\dot{\varphi}_t=d\varphi_t/dt$ with an \emph{instantaneous transaction cost} $ \lambda_t G(\dot{\varphi}_t)$. With proper definition of matrix-valued operation,  this formulation of transaction costs can be generalized to multiple risky assets with cross-asset costs. 

$G$ is used to model the ``shape'' of transaction costs. Portfolio choice problems for the most tractable quadratic specification $ G(x)= x^2/2$ are analyzed in models by \cite{almgren.li.16,bouchard.al.18,garleanu.pedersen.16,guasoni.weber.17,moreau.al.17,sannikov.skrzypacz.16}. In the small transaction costs limit as in~\cite{bayraktar.al.18,caye2017trading,guasoni.weber.18}, single-agent models are solved explicitly for the more general power costs $ G(x)= |x|^q/q$, $q \in (1,2]$ proposed by \cite{almgren.03}. Below, we introduce the general smooth convex cost functions $G$ as studied in~\cite{gonon2021asset,guasoni.rasonyi.15}, which includes the special examples mentioned above:

\begin{assumption}\label{cond:cost}
\begin{enumerate}[label=(\roman*)]
\item The transaction cost $G: \mathbb{R} \to \mathbb{R}_+$ is convex, symmetric, and strictly increasing on $[0,\infty)$, differentiable on $[0,\infty)$, and satisfies $G(0)=0$;
\item The derivative $G'$ is  also strictly increasing and differentiable on $(0,\infty)$ with $G'(0)=0$;
\item There exist constants $C>0$, $K\geq2$ and $x_0>0$ such that 
\begin{align*}
|(G')^{-1}(x)| \leq C(1+|x|^{K-1})\ \mbox{for all $x\in\R$}, \quad 
G''(x) \leq C \ \mbox{for all $|x|>x_0$}.
\end{align*}
\end{enumerate}
\end{assumption}
The power costs and their linear combinations are included, and the proportional costs can also be studied as the singular limit of power
costs with the power approaching 1.

The process $\lambda$ models the level of the transaction costs parameter. Like in the partial-equilibrium model of \cite{moreau.al.17}, we allow the transaction cost to fluctuate randomly over time:
\begin{align}\label{def:liquidity risk}
\lambda_t = \lambda \Lambda_t, \qquad t \in [0,T].
\end{align}
Here, the constant $\lambda>0$ modulates the magnitude of the cost (this scaling parameter will be sent to zero in the small-cost asymptotics, where explicit formula is available as in~\cite{guasoni.weber.18, caye.al.18}. We restate this results in Appendix~\ref{formal approximations}). The \emph{strictly positive} stochastic processes $(\Lambda_t)_{t \in [0,T]}$ describes the fluctuations of liquidity over time, and allows to model ``liqudity risk'' as in \cite{acharya.pedersen.05}, for example.

\paragraph{Preferences and Admissible Strategies. }

In order to capture the risk-aversion, we consider the \emph{linear-quadratic} model, i.e.,
we maximize our expected returns while penalizing for the corresponding quadratic variations: 
\begin{align}
J_T(\dot\varphi) =\frac{1}{T}\E\left[\int_0^T \Big(\varphi_t \mu_t -\frac{\gamma}{2}\left(\sigma_t\varphi_t+\xi_t\right)^2 - {\lambda_t}G\left(\dot{\varphi}_t\right)\Big)dt\right]. \label{general:regular goal}
\end{align}
As studied in~\cite{choi.larsen.15,garleanu.pedersen.13,sannikov.skrzypacz.16}, we trade off expected returns against the tracking error relative to the exogenous target position $-\xi/\sigma$.  We can also consider more sophisticated preferences such as exponential or power utility. 
Notice that the deep reinforcement learning algorithms such as Deep Hedging, can be easily generalized to sophisticated preferences or utility functions, but the usage of FBSDE solvers is limited since there might not exist a FBSDE system to describe the optimal solution of the hedging problem. See Section~\ref{sub: comparison} for details.

To make sure all terms are well defined, we focus on \emph{admissible} strategies that satisfy the following integrability conditions:
\begin{align}
\sup_{T>0}\frac{1}{T}\E\left[ \int_0^T\gamma \sigma_t^2 \varphi_t^2 + {\lambda_t}G\left(\dot{\varphi}_t\right)dt\right] <\infty.
\end{align}
Moreover, we impose the transversality condition to exclude Ponzi scheme, by ruling out arbitrarily large risky positions:
\begin{align}\label{assump:vanishing}
\lim_{\sqrt{\lambda}/T\to 0} \frac{\lambda}{T^2}\E\left[ \Lambda_T\sigma_T^2\varphi_T^2\right]  = 0. 
\end{align}

By strict convexity of the goal functional~\eqref{general:regular goal}, optimality of a trading rate $\dot{\varphi}$ is equivalent to the first-order condition that the Gateaux derivative $\lim_{\rho \to 0} (J_T(\dot\varphi+\rho\dot\eta)-J_T(\dot\varphi))/{\rho}$ vanishes for \emph{any} admissible perturbation $\eta$, cf.~\cite{ekeland.temam.99}:
\begin{align*}
0=\mathbb{E}_t\left[\int_0^T \left(\mu_t \int_0^t \dot{\eta}_u du-\gamma(\sigma_u\varphi_u+\xi_u)\sigma_u \int_0^t \dot{\eta}_u du - \lambda_t G'(\dot{\varphi}_t)\dot{\eta}_t\right)dt\right].
\end{align*}
As in \cite{bank.al.17}, this can be rewritten using Fubini's theorem as 
\begin{align*}
0=\mathbb{E}_t\left[\int_0^T \left(\int_t^T \Big(\mu_u -\gamma(\sigma_u\varphi_u+\xi_u)\sigma_u\Big) du -\lambda_tG'(\dot{\varphi}_t)\right)\dot{\eta}_t dt\right].
\end{align*}
Since this has to hold for \emph{any} perturbation $\dot{\eta}_t$, the tower property of conditional expectation yields
\begin{align}\label{eq:condition}
\lambda_tG'\left(\dot{\varphi}_t\right) &= \E_t\Big[\int_t^T \big(\mu_u -\gamma(\sigma_u\varphi_u+\xi_u)\sigma_u \big)du\Big]
= M_t
 + \int_0^t \big(\gamma(\sigma_u\varphi_u+\xi_u)\sigma_u -\mu_u\big)du,
\end{align}
for a martingale $dM_t={Z}_t dW_t$ that needs to be determined as part of the solution. Solving for the dynamics of the agents' optimal trading rates would introduce the dynamics of the transaction costs. Accordingly, it is preferable to instead work with the marginal transaction costs transaction costs as the backward process that describes the optimal controls: 
\begin{align}\label{eq:defY3}
Y_t := \lambda_tG'\left(\dot{\varphi}_t\right),
\end{align}
and from~\eqref{eq:condition} we can infer that $Y_T = 0$.
With this notation, the corresponding trading rates are
\begin{align*}
\dot{\varphi}_t = (G')^{-1}\left(\frac{Y_t}{\lambda_t}\right).
\end{align*}
Thus, the optimal position $\varphi$ and the corresponding marginal transaction costs $Y$ in turn solve the nonlinear FBSDE
\begin{align}
d\varphi_t &=(G')^{-1}\left(\frac{Y_t}{\lambda_t}\right) dt, && \varphi_0=\varphi_{0-},  \label{eq:eqphi}\\
dY_t &= \big(\gamma(\sigma_t\varphi_t+\xi_t)\sigma_t -\mu_t\big)dt + Z_t dW_t, &&Y_T =0.\label{eq:bwopt}
\end{align}
For constant quadratic costs $\lambda x^2/2$ and constant volatility $\sigma$, this FBSDE becomes linear and can in turn be solved by reducing it to a standard Riccati equation system~\cite{bank.al.17,bouchard.al.18,muhlekarbe2021equilibrium}. For volatilities and quadratic costs that fluctuate randomly, these ODEs are replaced by a backward \emph{stochastic} Riccati equation, compare~\cite{annkirchner.kruse.15,kohlmann.tang.02}. With nonlinear costs, no such simplifications are possible. In fact, the wellposedness of the system is generally unclear even for short time horizons since no Lipschitz condition for $(G')^{-1}$ is satisfied.

\paragraph{Reparametrization. }In the frictionless analogue of~\eqref{general:regular goal}, pointwise maximization yields the agent's individually optimal strategies, i.e.,
\begin{align}\label{eq:strat}
\bar{\varphi}_t = \frac{\mu_t}{\gamma\sigma_t^2} -\frac{\xi_t}{\sigma_t}, \qquad t \in [0,T].
\end{align}
In particular, we can write the dynamic of the frictionless strategy as
\begin{align}\label{eq:frictionless dynamic}
d \bar{\varphi}_t = \bar{b}_t dt + \bar{a}_t dW_t. 
\end{align}

In real-world applications, people are more interested in the changes induced by transaction costs relative to the frictionless version of the model. To facilitate both the numerical and the analytical analysis, we define 
\begin{align}
\Delta\varphi_t:=\varphi_t-\bar{\varphi}_t \label{eq: d delta phi}
\end{align}
for the difference between the frictional and frictionless positions, which we expect to vanish as the transaction costs tend to zero. In view of the dynamic~\eqref{eq:frictionless dynamic} strategy $\bar{\varphi}$ in the frictionless market and the forward equation~\eqref{eq:eqphi}, this process has dynamics 
\begin{align}\label{eq:dphidyn}
 d\Delta\varphi_t = \left(\left(G'\right)^{-1}\left(\frac{Y_t}{\lambda_t}\right) - \bar{b}_t\right)  dt - \bar{a}_tdW_t, 
 && \Delta\varphi_0 = \varphi_{0-} + \frac{\xi_0}{\sigma_0}-\frac{\mu_0}{\gamma\sigma_0^ 2}.
 \end{align}
Accordingly, plugging~\eqref{eq:strat} into~\eqref{eq:bwopt}, the backward process $Y$ therefore becomes
\begin{align}\label{eq:BSDEY}
dY_t =\gamma \sigma_t^2 \Delta \varphi_t dt + Z_t dW_t , &&{Y}_T=0. 
\end{align}

\begin{remark}

The analysis is focused on the optimal marginal transaction costs $Y$ instead of the optimal trading rates $\dot\varphi$, to avoid the potential extra requirement on the differentiability of $\left(G'\right)^{-1}$. Moreover, this choice of  backward component has linear dynamics, and absorb all the nonlinearity into the dynamic of the forward component, which makes our asymptotic analysis much easier and the numerical solution much more stable. 

We choose the deviation of the frictional position with respect to its frictionless counter party as the forward variable, since it exactly is the ``fast'' variable in the theoretical study in the single agent problem as in~\cite{bank.al.17,moreau.al.17,soner.touzi.13}. Empirically this choice performs the best in FBSDE solver. Notice that the drift $\bar{b}$ for the frictionless strategy  is usually neglected in the small-costs regime, and the frictionless volatility $\bar{a}$ makes the FBSDE system non-deterministic.  

\end{remark}

\section{Deep Learning-based Numerical Algorithms}\label{ML}

For practitioners, it is crucial to get the optimal trading strategy numerically, especially in a time-efficient manner. 
Therefore, 
 efficient numerical methods to solve the optimization problem~\eqref{general:regular goal} are in great need. 
With the development of GPUs and highly accurate machine learning models, the optimal hedging problems can be solved numerically by using learning-based algorithms.
There are two numerical approaches to do so: numerically solving the FBSDE system~\eqref{eq:dphidyn} - \eqref{eq:BSDEY} or directly targeting the goal functional~\eqref{general:regular goal}.    In this section,  we first present two popular deep learning-based numerical algorithms guided by these ideas: the FBSDE solver, and the Deep Hedging algorithm. Then, we highlight the advantages and disadvantages of these two methods in practice, especially when both of them fail to work. Finally, we introduce a transfer learning-based method, the {\em ST-Hedging} algorithm, to efficiently solve the optimal hedging problems when others fail to do so. 

A natural idea to start with is to solve the associated FBSDE system~\eqref{eq:dphidyn} - \eqref{eq:BSDEY} 
numerically. 
Since the dimension of the FBSDEs grows quickly with the number of assets and agents, and the boundary conditions are unclear, classical numerical methods such as finite differences fail to work. 
However, the learning-based algorithms can bypass the need to identify the correct boundary conditions and overcome the curse of dimensionality. 
The first learning-based introduced in Section~\ref{sub:FBSDE Solver} is the {FBSDE solver}, proposed in the spirit of the BSDE approach by Han, Jentzen, and E in~\cite{han.al.17}. 
The algorithm approximates the dependence of the backward components on the forward components by a deep neural network, and it solves the FBSDE through simulation and stochastic gradient descent methods. 
It can also handle higher-dimensional settings, e.g., random and time-varying transaction costs, returns, and volatility processes. 
In addition, the FBSDE solver can even solve equilibrium models as is used in~\cite{gonon2021asset}. FBSDE solver works very well when the time horizon is not too long, and the convergence of this method with respect to a special class of FBSDE systems is established in small time durations in~\cite{han2020convergence}. 
However, the FBSDE solver does not scale well when the trading horizon or  time discretization is large. 
In our example,  with calibrated market parameters from~\cite{gonon2021asset}, the FBSDE solver fails to converge if we consider a more than one trading year trading horizon (approximately 252 trading days). Furthermore, in some complicated cases, we cannot even characterize the optimal trading problem with a  system of FBSDEs. 

In order to overcome the difficulties mentioned above, we consider deep reinforcement learning framework that is known as Deep Hedging in Section~\ref{sub:utility}. 
The name for this type of algorithms comes from the groundbreaking {Deep Hedging} work of Buehler et.al.~\cite{buehler2019deep,buehler2019deepa}. 
Pioneered by~\cite{becker2019deep,han2016deep,hu2019deep, hure2018deep, min2021signatured, Moallemi2021RL, sun2020optimizing,nevmyvaka2006reinforcement,reppen2020bias,ruf2021hedging}, various reinforcement learning algorithms
 are implemented and perform extreme success in portfolio optimization problems with transaction costs. 
Reinforcement learning can even solve the Nash equilibria, see~\cite{casgrain2019deep} for details. 
The key idea is to  directly parametrize the optimal trading rate and optimize the discretized version of preference~\eqref{general:regular goal}. Then, through stochastic gradient descent based algorithms and back-propagation, the reinforcement algorithm updates the parameters of the networks until a (local) optimizer is found (see~\cite{sutton2018reinforcement} for a detailed introduction). However, Deep Hedging algorithms require a huge number of simulated sample paths and finer discretization of the trading time horizon, which makes the training time significantly longer than FBSDE solver. 

One major drawback for both algorithms is that their performances become suboptimal when the trading time horizon is long. Just as in the experiments illustrated in Section~\ref{experiment: quadratic} and Section~\ref{exp:3/2}, when the investment horizon is more than one trading year, FBSDE solver fails to converge. Although Deep Hedging still works, it suffers from overfitting and becomes difficult to tune. 
Moreover, when there are more than one stocks in the market with cross-sectional effects, the FBSDE solver and Deep Hedging both fail to converge even with intermediate trading horizons. In other words, these methods do not generalize well with the increase of the dimensions due to coupling. 

In parallel, as in~\cite{almgren.li.16,bayraktar.al.18, guasoni.weber.18,kallsen.muhlekarbe.17,moreau.al.17,shreve1994optimal,soner.touzi.13}, researchers have focused on the asymptotic optimal trading strategy in the small transaction costs limit. Especially, there are explicit asymptotic formulas available for the optimal trading strategies. 
A nature following-up question is if we can take advantage of both the advantages of the deep learning algorithms and the asymptotic approximations. 
With this idea, we propose our ST-Hedging algorithm that allow us to have the convenience of the explicit asymptotic approximation and the accuracy of the  deep learning algorithm.  
ST-Hedging reaches state-of-the-art performance comparing to Deep Hedging for long investment horizons as in Section~\ref{experiment: quadratic} and Section~\ref{exp:3/2}. In addition, it is scalable with respect to dimensions as in Section~\ref{exp:multi}. 

The organization of the rest of the section is as follows. First, we introduce the FBSDE solver in Section~\ref{sub:FBSDE Solver} and the Deep Hedging algorithm in Section~\ref{sub:utility}. 
In Section~\ref{sub:pasting},  we introduce the leading-order approximation formula and propose our ST-Hedging algorithm.


\subsection{FBSDE Solver}\label{sub:FBSDE Solver}

Let us describe the algorithm in more detail. 
The key observation is that the dynamic of the system 
~\eqref{eq:dphidyn} - \eqref{eq:BSDEY} is pinned down if the volatility $Z$ of the backward component $Y$ is specified. 
Indeed, let us fix a time partition $0=t_0 <  \ldots < t_{N} = T$, where $t_m = mT/N$ and~$\Delta t = T/N$.  Let $(\Delta W_{m})_{m=0}^N$ be iid normally distributed random variables with mean zero and variance $\Delta t$. 
At time $t_m$,  we have $\mu_{t_m}$, $\sigma_{t_m}$ and $\lambda_{t_m}$ from the market and the endowment volatility $\xi_{t_m}$, the drift $ \bar{b}_{t_m}$ and volatility $ \bar{a}_{t_m}$ for the frictionless strategy of the agent.
With $\left(\Delta W_{m},\Delta \varphi_{t_m} \right)$ as input for each time step, the discrete-time analogue of the forward update rule for the FBSDE
system~\eqref{eq:dphidyn} - \eqref{eq:BSDEY} becomes
\begin{align}
\Delta\varphi_{t_{m+1}} &= \Delta\varphi_{t_{m}}  + \left(\left(G'\right)^{-1}\left(\frac{Y_{t_m}}{\lambda_{t_m}}\right) - \bar{b}_{t_m}\right) \Delta t - \bar{a}_{t_m} \Delta W_m, 
\label{eq:varphi}\\
Y_{t_{m+1}} & = Y_{t_m} + \gamma \sigma_{t_m}^2 \Delta\varphi_{t_m} \Delta t + Z_{t_m} \Delta W_m.
\label{eq:Y}
\end{align}
Together with a guess for the initial values of the backward components, we can then simulate the system with a standard forward scheme and check whether it satisfies the terminal condition as $Y_T = 0$. 
Searching for the correct initial values is relatively straightforward. 
The more difficult challenge is parametrizing the ``controls'' $Z$ -- which are functions of time and the forward processes --  and updating the corresponding initial guesses until the terminal condition is matched sufficiently well. In~\cite{han.al.17}, Han, Jentzen, and E introduced the innovative way to parametrize $Z$. 
At each time point $t_m $, they parametrized $Z_{t_m}$ with a (shallow) network structure $F^{\theta_m}$, with time $t_m$ value of the forward components as the inputs. 
Together with the guess of the initial value of the backward component, which we denote as $Y^\theta_0$, 
 the system can then be simulated forward in time.  In this way, the shallow network structures are concatenated over time and become a deep neural network architecture, since the number of layers of the architecture and the number of parameters in the networks grows linearly in the number of discretization steps $N$. In other words, we input simulated Brownian paths into the deep network structure, and it outputs the terminal values of the backward components.

Our task is to update the parameters $ \{Y^\theta_0, \theta_m, m=0,\ldots,N\}$ until the terminal conditions $Y_T = 0$ are matched sufficiently well. 
This iterative update of the network parameters until $Y_T=0$ is the essence of \emph{supervised machine learning} tasks. State-of-the-art performance is achieved through back-propagation and stochastic gradient descent-type algorithms, see \cite{Goodfellow2016} for more details. 

\begin{remark}
Recall that there is no assumption on the dynamic of the market. Here, however, we need to mildly assume that we can simulate the return $\mu$ and volatilities $\sigma$,  the drift $\bar{b}$ and volatilities $\bar{a}$ for the frictionless position $\bar\varphi$ from~\eqref{eq:frictionless dynamic}, as well as the endowment volatilities $\xi$. 
\end{remark}

Now, we focus on the parametrization of $\{{Z}_{t_m}\}_{m=0}^N$ within a function class $\{ F^{{\theta}}: \theta\in\Theta \}$. A very popular class of functions in the machine learning community is the ``singular activation function Rectified Linear Unit'' (\emph{ReLU}):
$\text{\emph{ReLu}}(x) = \max\{x, 0\}$.
A popular class of approximation function $F$ is the convolution of linear functions with \emph{ReLu} activations. 
For the numerical experiments in Section~\ref{results}, at each time $t_m$, the $F^{{\theta_m}}$ we use is a neural network 
{with one hidden layer of 15 hidden units with batch normalization (BN), and that is 
\begin{align*}
F^{\theta_m} (x) =   w^2_{\theta_m} \left( \text{\emph{ReLu}} \left( BN \left(w^1_{\theta_m} x + b^1_{\theta_m} \right)\right)\right) + b^2_{\theta_m} .
\end{align*}
Recall that with the Brownian motion $W$ as a $1$-dim process, and the forward component $\varphi$ as a 1-dim vector, $Z$ is also a 1-dim vector. $w^1_{\theta_m} \in\R^{15\times 2}$ and $b^1_{\theta_m}\in\R^{15}$ are called the input \emph{weights} and \emph{biases} of the network, whereas $w^2_{\theta_m} \in\R^{1 \times 15}$ and $b^2_{\theta_m}\in\R^{1}$ are called the output \emph{weights} and \emph{biases} of the network. 
} We summarize the structure of FBSDE solver in Algorithm~\ref{algo:FBSDE Solver}: 
\medskip

\RestyleAlgo{boxruled}
\LinesNumbered
\begin{algorithm}[H]
  \caption{Training Procedure of FBSDE Solver~\label{algo:FBSDE Solver}}
   \KwData{
   {${\Delta \varphi}^{\theta}_{t_0}= \varphi_{0-} + \frac{\xi_0}{\sigma_0}-\frac{\mu_0}{\gamma\sigma_0^ 2}$,  $W_{t_0}=0$, $\gamma$, dynamic of processes $\mu$, $\sigma$, $\lambda$, $\xi$,  $\bar{b}$ and $\bar{a}$};
   }
 initialization of parameters $\{Y^\theta_0,  \theta_m, m=0,\ldots, N\}$,
$m=0,\ \ Y^\theta_{t_0} = Y^\theta_0$;\\
 \While{epoch $\leq$ \texttt{Epoch}}
 {
sample $\Delta W$,  $\texttt{batch\_size}\times N$  iid Gaussian random variables with variance $\Delta t$;\\
\While{$m<N$}{
$Z^{\theta}_{t_m} \quad\ \ = F^{\theta_m}(W_{t_m}, {\Delta \varphi}^{\theta}_{t_m})$;\\
$Y^\theta_{t_{m+1}}
\ \ \ =Y^\theta_{t_m} + 
\gamma \sigma_{t_m}^2 {\Delta \varphi}^{\theta}_{t_m} \Delta t 
+ Z^\theta_{t_m} \Delta W_m $ ; \\
${\Delta \varphi}^{\theta}_{t_{m+1}} ={\Delta \varphi}^{\theta}_{t_m} + \left(G'\right)^{-1}\left({Y^\theta_{t_m}}/{\lambda_{t_m}}\right)  \Delta t -\bar{b}_{t_m}\Delta t - \bar{a}_{t_m} \Delta W_m$; \\
$W_{t_{m+1}} \ \ = W_{t_m}+\Delta W_m$;\\
$m$++;\\
 }
Output ${Y}^{\theta}_T = {Y}^{\theta}_{t_N}$;\\
$\texttt{Loss} = \| Y^\theta_T\|^2/{\texttt{batch\_size}}$; \\
Calculate the gradient of \texttt{Loss} with respect to $\theta$;\\
Back propagate updates for $\{Y^\theta_0,  \theta_m, m=0,\ldots, N\}$ via \texttt{Adam};\\
epoch ++;
 }
\Return:  (local) optimizer $\theta^* = \{Y^{\theta^*}_0, \theta^*_m, m=0,\ldots, N\}$.\\
\end{algorithm}

As studied by~\cite{han2020convergence}, when the short time existence of the FBSDE system is available, the convergence rate of the FBSDE solver is guaranteed.
In reality, FBSDE system with short time existence does not always generalize to global existence. That is to say, if we only extend the time horizon with all the parameter remain the same, the FBSDE solver might not converge anymore.
Especially, since only the terminal value of the backward components are compared, the FBSDE solver fails to work efficiently with even intermediate trading horizons. 

\textcolor{blue}{
}

\subsection{Deep Hedging Networks}\label{sub:utility}
Let us fix a time partition $0=t_0 <  \ldots < t_{N} = T$, where $t_m = mT/N$ and~$\Delta t = T/N$. 
Under this partition, for a single simulation, the discretized version of the goal functional~\eqref{general:regular goal}, becomes 
\begin{align}\label{discretized goal}
J_T(\dot\varphi) = \frac{1}{N} \sum_{m=0}^N \left[\varphi_{t_m} \mu_{t_m} - \frac{\gamma}{2}\left(\sigma_{t_m} \varphi_{t_m} + \xi_{t_m} \right)^2- \lambda_{t_m} G\left(\dot\varphi_{t_m}\right) \right],
\end{align}
where $\varphi$ follows the discretized version of update process
$$
\varphi_{t_{m+1}} =\varphi_{0-} +  \sum_{k=0}^m \dot\varphi_{t_k} \Delta t= \varphi_{t_m} +\dot\varphi_{t_m} \Delta t. 
$$

At each time point $t_m$, we directly parametrize the trading strategy $\dot\varphi_{t_m}$ using a (comparatively shallow) network $F^{\theta_m}$. 
For the numerical experiments in Section~\ref{results}, the neural network we use has three hidden layers with 10, 15 and 10 hidden units, respectively. We also implement the batch normalization (BN) at each hidden layer and that is, for $l=1,2,3$:
\begin{align*}
F^{\theta_m} (x) =  w^4_{\theta_m} \left( F^{\theta_m}_3 \left( F^{\theta_m}_2 \left( F^{\theta_m}_1 \left( x \right)\right)\right)\right) + b^4_{\theta_m},
\quad
 F^{\theta_m}_l (x) = \text{\emph{ReLu}} \left( BN \left(w^l_{\theta_m} x + b^l_{\theta_m} \right)\right). 
\end{align*}
Recall that with time $t$ as a real number, the Brownian motion $W$ as a $1$-dim process, and the forward component $\varphi$ as a 1-dim vector, $\dot{\varphi}$ is also a 1-dim vector. $w^1_{\theta_m} \in\R^{10\times 3}$ and $b^1_{\theta_m}\in\R^{10}$ are called the input \emph{weights} and \emph{biases} of the network, and $w^2_{\theta_m} \in\R^{15\times 10}$, $b^2_{\theta_m}\in\R^{15}$, $w^3_{\theta_m} \in\R^{10\times 15}$ and $b^3_{\theta_m}\in\R^{10}$ are called the hidden \emph{weights} and \emph{biases}, whereas $w^4_{\theta_m} \in\R^{1 \times 10}$ and $b^4_{\theta_m}\in\R^{1}$ are called the output \emph{weights} and \emph{biases} of the network. 
We summarize the Deep Hedging algorithm in Algorithm~\ref{algo:utility} as follows:
\medskip

\RestyleAlgo{boxruled}
\LinesNumbered
\begin{algorithm}[H]
  \caption{Training Procedure of Deep Hedging Algorithm ~\label{algo:utility}}
   \KwData{$\varphi^{\theta}_{t_0} = \varphi_{0-}$, $W_{t_0}=0$, $\gamma$, 
   dynamic of processes   $\mu$, $\sigma$, $\lambda$ and $\xi$}
 initialization of parameters $\{ \theta_m, m=0,\ldots, N\}$, 
$m=0$;\\
 \While{epoch $\leq$ \texttt{Epoch}}
 {
sample $\Delta W$,  $\texttt{batch\_size}\times N\times d$  iid Gaussian random variables with variance $\Delta t$;\\
\While{$m< N$}{
$\dot{\varphi}^\theta_{t_{m}}\quad\ = F^{\theta_{m}}(t_m, W_{t_m},\varphi^\theta_{t_m})$;\\
$\varphi^\theta_{t_{m+1}}\ = \dot{\varphi}^\theta_{t_m} \Delta t + \varphi^\theta_{t_m}$;\\
$W_{t_{m+1}} = W_{t_m} + \Delta W_{m}$;\\
$m$++;
}
$\texttt{Loss} = 
- \sum_{m=0}^N 
         \left[\varphi^\theta_{t_m} \mu_{t_m} - \frac{\gamma}{2}\left(\sigma_{t_m} \varphi^\theta_{t_m} + \xi_{t_m} \right)^2 - \lambda\Lambda_{t_m} G\left(\dot\varphi^\theta_{t_m}\right)  \right]
$;
\\
Calculate the gradient of \texttt{Loss} with respect to $\theta$;\\
Back propagate updates for $\{\theta_m, m=0,\ldots, N\}$ via \texttt{Adam} (switch to \texttt{SGD} when fine tuning);\\
epoch ++;
 }
\Return:  (local) optimizer $\theta^* = \{ \theta^*_m, m=0,\ldots, N\}$
\end{algorithm}

\subsection{Stable-Transfer Learning Algorithm: ST-Hedging}\label{sub:pasting} 
The optimal hedging problem can be treated in two folds.  On the numerical aspect, as the development of deep learning-based algorithm, we can also solve the system through various different scheme, with  is discussed in details in Section~\ref{ML}. 
On the theoretical aspect,  one can study the \emph{asymptotic} limiting behavior as the trading horizon $T$ goes to $\infty$. Researchers always focus on small transaction costs for tractability as in~\cite{almgren.li.16,bayraktar.al.18, guasoni.weber.18,kallsen.muhlekarbe.17,moreau.al.17,shreve1994optimal,soner.touzi.13}. 
A natural question is, when the trading horizon $T$ is identified as long and whether it is related to  the smallness assumption of the transaction costs level $\lambda$. 
It turns out,  the ``magic'' quantity is $\sqrt{\lambda}/{T}$; namely, the smallness assumption on $\lambda$ should not only be an absolute quantity, but also a relative quantity, depending on whether $\sqrt{\lambda}/{T}$ is of higher order of 1. Even more surprisingly, $O(\sqrt{\lambda}/{T})$ is the approximation accuracy for the leading-order approximation~\eqref{leading order: general} under rather general condition (see the discussion in Appendix~\ref{sec:asymptotic}), that is independent of the choice $G$ of the transaction costs. 

More specifically, the leading-order asymptotic optimal strategy $\dot\varphi = \left( \dot\varphi_t\right)_{t\in[0,T]}$ for the frictional mean-variance preference~\eqref{general:regular goal} is (with a little abuse of notation) essentially given by: 
\begin{align}\label{leading order: general}
\dot\varphi_t = -\sign{\left(\varphi_t - \bar\varphi_t\right)}\times (G')^{-1}\left(\frac{g(\left|\varphi_t - \bar\varphi_t\right|;\gamma,\sigma_t,\bar{a}_t,\lambda_t)}{\lambda_t}\right) , 
\end{align}
where $g$ is the unique solution to the following ODE with proper ``mean-reverting'' requirement:
\begin{equation}
 (G')^{-1}\left(\frac{g(x;\gamma,\sigma,\bar{a},\lambda)}{\lambda}\right)g'(x;\gamma,\sigma,\bar{a},\lambda) + \frac{\bar{a}^2}{2} g''(x;\gamma,\sigma,\bar{a},\lambda)  = \gamma\sigma^2 x. \label{eqn:ergodic ODE}
\end{equation}
This leading-order formula~\eqref{leading order: general} and the associated ODE~\eqref{eqn:ergodic ODE} are proposed and studied formally in~\cite{bayraktar.al.18, caye2017trading,guasoni.weber.18,kallsen.muhlekarbe.17,soner.touzi.13}, and we refer the readers to  Appendix~\ref{sec:asymptotic} for details of the derivation and rigorous proof in a special case.
Notice that once we obtain the numerical solution of~\eqref{eqn:ergodic ODE}, the leading-order formula can be applied \emph{immediately}.

A nature following-up question is if we can take advantage of both the advantages of the deep learning algorithms and the leading-order approximations. 
More precisely, can we have the convenience of the leading-order approximation formula and the accuracy of the  deep learning algorithm near the very end of the terminal date. 
Based on the leading-order approximation accuracy $O(\sqrt{\lambda}/T)$, we answer the above question in affirmation. 
%

To fully utilize the advantages, we need to adjust the  deep learning-based  algorithm. 
While we can obtain good performance under the current setup for the Deep Hedging algorithm, the hedging policy learned by the deep neural network becomes more and more unstable for longer time horizons. This is because the  variance induced by the Brownian motion $W_t$ is proportional to the time $t$, hence the hedging positions $\varphi_t$ is going to blow up as $t$ increases. To point out, with more than one stocks, the coupling effects will further contribute to the instability of $\varphi_t$.  
Instead, inspired by the variable choice of the FBSDE solver, we use the ``fast variable'':  $\Delta \varphi_t = \varphi_t - \bar{\varphi}_t$. 
As the apriori study in~\cite{almgren.li.16,bayraktar.al.18, guasoni.weber.18,kallsen.muhlekarbe.17,moreau.al.17,shreve1994optimal,soner.touzi.13}, 
the ``fast variable'' $\Delta\varphi_t - \bar{\varphi}_t$ is a mean-reverting process that quickly oscillates around zero. 
In particular, 
the variance of $\Delta\varphi_t$ is upper bounded hence stabilized as time increases. As long as the deep learning algorithm on a short trading horizon is well-trained, then it can be easily adapted to arbitrarily long trading horizons without scaling up the errors. Numerical experiments have demonstrated the success of our variance reduction technique as in~Section~\ref{results}. This choice is reflected in the ``stable'' part of the ST-Hedging. 

For the ``transfer'' part of ST-Hedging, we leverage \emph{transfer learning} technique to automatically pick the optimal switching time from the approximation formulas to accurately learning of the optimal strategy via deep learning-based algorithm.
By an initialized $\kappa$, we initialize our starting location $t_M$ near the end of the trading horizon by the suggestion from the approximation accuracy $T-t_M = O(\sqrt{\lambda})$. 
Then, with semi-well trained network, we simulate the rest of hedging positions through the stable hedging strategy, and compare it with the asymptotic hedging strategy of the accumulated gain. Then, we select the optimal $M^*$ such that the stable hedging loss is less than the asymptotic optimal trading loss by the largest amount and convert it back to $\kappa = (T-t_{M^*})/\sqrt{\lambda}$. We repeat the process iteratively until $\kappa$ converges, which equivalently means the optimal switching time $t_{M^*}$ is obtained.
Notice that we alternatively learn the optimal trading strategy and the optimal switching time, and the output of ST-Hedging is the time $t_{M^*}$ to switch from the asymptotic strategy and the network parameters for the learnt trading strategy afterwards. 
 \\

\RestyleAlgo{boxruled}
\LinesNumbered
\begin{algorithm}[H]
  \caption{Training Procedure of ST-Hedging ~\label{algo:paste}}
   \KwData{
   {${ \varphi}^{(0)}_{t_0}= \varphi_{0-}$,  $W_{t_0}=0$, $\gamma$, $\kappa$, $\lambda$,  dynamic of processes $\mu$, $\sigma$, $\Lambda$, $\xi$,  $\bar{\varphi}$, $\bar{a}$};
   }

 \While{$\texttt{epoch} \leq \texttt{Epoch}$}
 {
sample $\Delta W$,  $\texttt{batch\_size}\times N$  iid Gaussian random variables with variance $\Delta t$;\\
\texttt{\# Calculation of asymptotic optimal trading strategy}\;
$m=0, \ \Delta\varphi_{t_0}^{(0)}=0$\;
\While{$m<N$}{
$\dot\varphi^{(0)}_{t_m} = -\sign\left(\Delta\varphi^{(0)}_{t_m} \right) (G')^{-1}\left({g\left(\left|\Delta\varphi^{(0)}_{t_m}\right|; \gamma, \sigma_{t_m},\bar{a}_{t_m},\lambda\Lambda_{t_m}\right)}/{\lambda\Lambda_{t_m}}\right)$\;
$\varphi^{(0)}_{t_{m+1}}
 =\varphi^{(0)}_{t_m}  + \dot\varphi^{(0)}_{t_m}
\Delta t;$  \\
$\Delta\varphi^{(0)}_{t_{m+1}} = \varphi^{(0)}_{t_{m+1}} - \bar\varphi_{t_{m+1}}$\;
$W_{t_{m+1}} \ \ = W_{t_m}+\Delta W_m$;\\
$m$++;\\
 }
 \texttt{\# Training of stable hedging strategy}\;
$M = \min\left\{m: T - t_{m} < \kappa \sqrt{\lambda} \right\}, m=M$, randomly initialize $\Delta \varphi_{t_M}^\theta$\;
\While{$m< N $}{
$\varphi^\theta_{t_m} = \Delta\varphi^\theta_{t_m} +\bar\varphi_{t_m}$;\\
$\dot{\varphi}^\theta_{t_{m}}\quad\ = F^{\theta_{m}}(t_m, \Delta\varphi^\theta_{t_m} )$;\\
$\Delta\varphi^\theta_{t_{m+1}}\ = \dot{\varphi}^\theta_{t_m} \Delta t + \varphi^\theta_{t_m}- \bar\varphi_{t_{m+1}}$;\\
$m$++;
}
$\texttt{Loss} = 
- \sum_{m=M}^{N } 
         \left[\varphi^\theta_{t_m} \mu_{t_m} - \frac{\gamma}{2}\left(\sigma_{t_m} \varphi^\theta_{t_m} + \xi_{t_m} \right)^2 - \lambda\Lambda_{t_m} G\left(\dot\varphi^\theta_{t_m}\right)  \right]
$;
\\
Calculate the gradient of \texttt{Loss} with respect to $\theta$;\\
Back propagate updates for $\theta$ via \texttt{Adam} (switch to \texttt{SGD} when fine tuning);\\

\texttt{epoch} ++;

\If{$\texttt{epoch} \ \%\ 1000 ==0$}{
\texttt{\# Training of transfer time}\;
$\kappa^* =\kappa$;
 
\While{$\kappa^*$ Not Converged}{
	$M = \min\left\{m: T - t_{m} < \kappa^* \sqrt{\lambda} \right\}$\;


$\texttt{Loss}_0 (M)=- \sum_{m=M}^{N} 
         \left[\varphi^{(0)}_{t_m} \mu_{t_m} - \frac{\gamma}{2}\left(\sigma_{t_m} \varphi^{(0)}_{t_m} + \xi_{t_m} \right)^2 - \lambda\Lambda_{t_m} G\left(\dot\varphi^{(0)}_{t_m}\right)  \right]
$;

$\texttt{Loss}_\theta(M) = - \sum_{m=M}^N 
         \left[\varphi^\theta_{t_m} \mu_{t_m} - \frac{\gamma}{2}\left(\sigma_{t_m} \varphi^\theta_{t_m} + \xi_{t_m} \right)^2 - \lambda\Lambda_{t_m} G\left(\dot\varphi^\theta_{t_m}\right)  \right]
$;

$M^* = \mbox{argmin}_M \left(\texttt{Loss}_\theta(M) - \texttt{Loss}_0(M)\right)
$\;
$\kappa^* = \left({T - t_{M^*}}\right)/{\sqrt{\lambda}}$\;
}
$\kappa=\kappa^*$\;
}
 }

\Return:  (local) optimizer  $M^*, \theta^* = \{ \theta^*_m, m=M^*,\ldots, N\}$.
\end{algorithm}

\section{Experiments and Comparison Results}\label{results}
In this section, the FBSDE solver in~\ref{sub:FBSDE Solver}, the Deep Hedging algorithm in~\ref{sub:utility} as well as the proposed ST-Hedging algorithm in~\ref{sub:pasting} are implemented. 
We  test them on various models which even includes more than one stocks and/or with stochastic liquidity risk, and we discuss and document their advantages and disadvantages in these empirical implementations in Section~\ref{sub: comparison}. 

To illustrate the performance of the FBSDE solver, the Deep Hedging and the ST-Hedging algorithm, we consider two calibrated Bachelier models without liquidity risk: a quadratic transaction costs model, and a general power costs model with elastic parameter $q = 3/2$. 
In all experiments, the parameters are from the calibration of real-world time series data in~\cite{gonon2021asset}.  The agent's risk aversion is set to be $\gamma =  1.66\times10^{-13} $, the total shares outstanding in the market is $s = 2.46\times 10^{11}$, and the  stock return is $\mu =0.072$ with the stock volatility being  $\sigma = 1.88 $ 
\footnote{In~\cite{gonon2021asset}, calibration is done for an equilibrium model of two agents where the asset returns depend on the risk aversion of both agents. In our single-agent model we just take the asset's parameters to provide a more realistic numerical analysis. 
 Although the models considered here are Bachelier, it can be generalized to Geometric Brownian motion models and the whole calculation and derivation follow if we switch our current analysis from shares positions to money in stock positions, with the unit of the transaction costs changed accordingly. }. 
 All models were trained using the free GPU service on Google Colab. 
The implementation details and codes can be found here:  \href{https://github.com/xf-shi/ML-for-Transaction-Costs/blob/main/README.md}{https://github.com/xf-shi/ML-for-Transaction-Costs/blob/main/README.md}. 


\subsection{Quadratic Transaction Costs $G(x) = x^2/2$.}\label{experiment: quadratic}

With $G(x)= x^2/2$, on top of the leading-order approximation, the FBSDE system~\eqref{eq:dphidyn} - \eqref{eq:BSDEY} becomes linear and the optimal trading strategy is given explicitly  by\footnote{ Details and derivations of the solution for the FBSDE system with constant quadratic costs~\eqref{eq:dphidyn} - \eqref{eq:BSDEY} can be found in Appendix~\ref{ss: exact}. }:
\begin{align}\label{optimal: quadratic}
\dot\varphi_t = - \sqrt{\frac{\gamma\sigma^2}{\lambda}}\tanh\left( \sqrt{\frac{\gamma\sigma^2}{\lambda}}(T-t) \right)\Delta \varphi_t. 
\end{align}

The remaining parameters are taken from the calibrations from  Section 5 in~\cite{gonon2021asset}, where under quadratic costs   the endowment volatility parameter $\xi= 2.19\times 10^{10} $ and  the liquidity level parameter is $\lambda = 1.08\times 10^{-10}$. 
With the trading horizon $T=10, 21, 42, 252, 2520$ trading days, the performance of both learning methods and the leading-order approximation are summarized in { Table~\ref{table: Single_q=2_TR=10} - \ref{table:Single_q=2_TR=2520}} and  illustrated in Figure~\ref{fig: Single_q=2_TR=10} - \ref{fig: Single_q=2_TR=2520}. To account for the large Monte Carlo error from the long trading horizons such as $T=2520$ trading days (which is approximately 10 trading years), we evaluated the performance of our models using 100 million sample paths.

When the trading horizon is relatively short ($T\leq 42$ trading days), the performance of the FBSDE solver is the closest to the ground truth path-wisely, see Figure~\ref{fig: Single_q=2_TR=10} - \ref{fig: Single_q=2_TR=42}. 
It means that when the FBSDE solver converges, it can learn the optimal trading strategy perfectly. 
It takes approximately 1-2 hours to complete the training with fine tuning. 
However, as is already experienced in~\cite{gonon2021asset}, when the trading horizon is longer than one trading year, the FBSDE solver fails to converge. 

On the other hand,    the Deep Hedging algorithm can still work well once the hyperparameters are appropriately tuned,  as illustrated in Figure~\ref{fig: Single_q=2_TR=252} - \ref{fig: Single_q=2_TR=2520} 
and {Table~\ref{table: Single_q=2_TR=252} - \ref{table:Single_q=2_TR=2520}}.  
However, we can see that under-fitting phenomena exist, and it becomes more severe as the trading horizon increases, meaning that the hyperparamters become harder to tune.
The training time of Deep Hedging varies with the length of the trading horizon. For short horizons, the training time is approximately 3-4 hours with fine tuning, which is longer than FBSDE solver. For longer trading horizons, the training and fine tuning time required can be as long as weeks, or even months, which is too long for practical use.

In addition, the performance of the leading-order asymptotic approximation~\eqref{leading order: general} 
becomes better and better as the time horizon becomes longer. 
From the mean squared error of $\dot \varphi_T/s$ in the last column of  {Table~\ref{table: Single_q=2_TR=10} - \ref{table:Single_q=2_TR=2520}} we can see that shortly around the terminal time, the leading-order approximation diverges substantially from 0. This observation is consistent with the figures of $\dot{\varphi}$ and $\varphi$ as in~Figure~\ref{fig: Single_q=2_TR=10} - \ref{fig: Single_q=2_TR=2520}.  
 In other words, we are giving up the performance at the very end of the trading horizon to get an overall near-optimal performance without spending hours or days on tuning the hyperparameters in the machine learning algorithms. In fact, when the trading horizon is as large as $T=2520$ trading days, leading-order approximation achieves the closest expected utility to the ground truth, while Deep Hedging is hard to train as the variance of the parametrized position $\varphi_t$ becomes very large with long trading horizon. 
 
Finally, ST-Hedging leverages the mean-reverting fast variables $\Delta \varphi_t$ to stabilize the variance of the parametrized variable. Therefore, comparing to Deep Hedging, it is scalable to arbitrarily large trading horizons. Moreover, ST-Hedging does not only achieve near optimal performance across all trading horizons, but also requires a much shorter training time to convergence. 
To wit, when the trading horizon is short, ST-Hedging, Deep Hedging, and FBSDE solver can achieve the same state-of-the-art performance which learn exactly the ground truth (see Table~\ref{table: Single_q=2_TR=10} - \ref{table: Single_q=2_TR=42}), while the leading-order approximation shows a clear gap, as in Figure~\ref{fig: Single_q=2_TR=10} - \ref{fig: Single_q=2_TR=42}. When the trading horizon gets larger, the ST-Hedging still maintains the state-of-art performance from the beginning to the end. Notice that in these long trading horizons, FBSDE solver can no longer converge. Deep Hedging shows a near optimal performance with fluctuation around the ground truth, but the training time increases significantly. While the leading-order approximation can achieve a near optimal utility, it deviates from the ground truth around the terminal time (see Table~\ref{table: Single_q=2_TR=252} -  \ref{table:Single_q=2_TR=2520}).  For all trading horizons, ST-Hedging requires less fine tuning compared to Deep Hedging, and the training can be finished within 2 hours. 

\begin{figure}[H]
    \centering
    \includegraphics[width=0.8\linewidth]{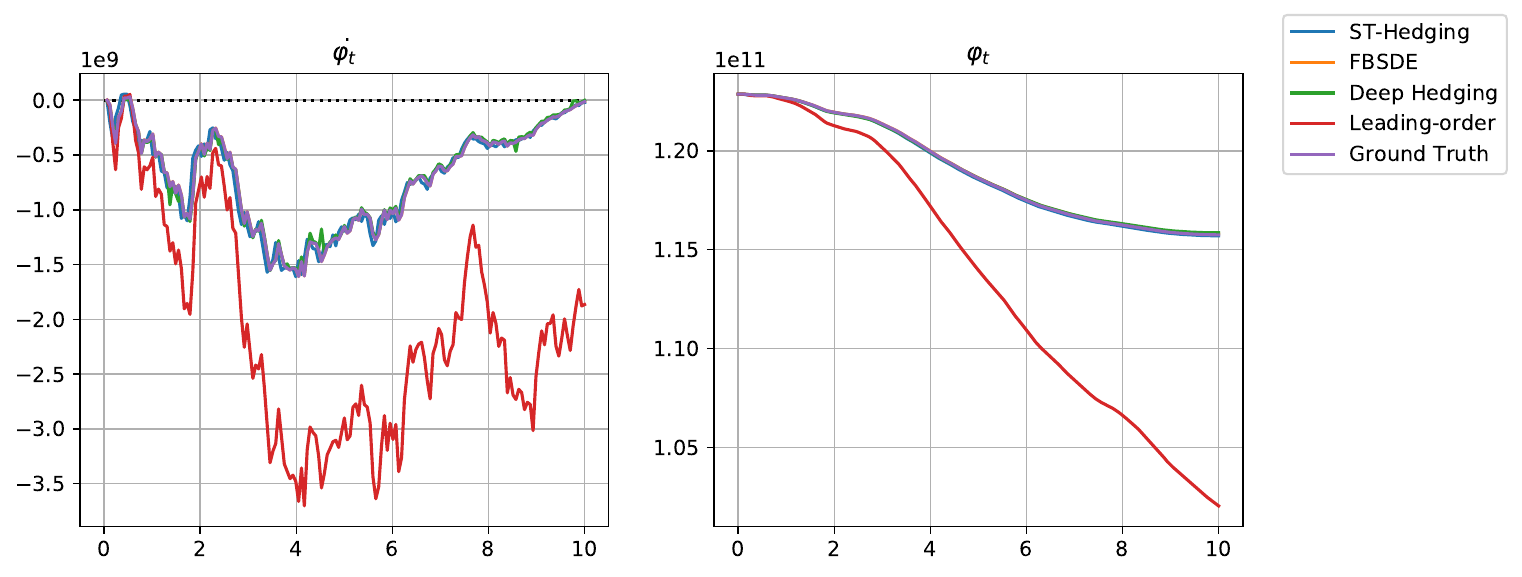}
 \caption{Optimal trading rates $\dot{\varphi}$ (left panel) and optimal positions $\varphi$ (right panel) for trading horizon $T = 10$ days with calibrated parameters under quadratic costs.}
    \label{fig: Single_q=2_TR=10}
\end{figure}
\begin{table}[H]
\centering
{\footnotesize{
\begin{tabular}{||c c c||} 
 \hline
 Method &  $J_T (\dot\varphi) \pm \mathrm{std}$ &$\E[|\dot\varphi_T|^2/s^2]$\\ [0.5ex] 
 \hline\hline  
ST-Hedging &$4.25\times 10^{9} \pm 1.51\times10^{9}$ & $2.64\times10^{-9}$  \\
Deep Hedging &$4.25\times 10^{9} \pm 1.52\times10^{9}$ & $2.57\times10^{-10}$  \\
 FBSDE Solver &$4.25\times10^{9}\pm 1.52\times10^{9}$ & $2.00 \times10^{-9}$ \\
 Leading-Order Approximation &$4.18\times10^{9}\pm1.53\times10^{9}$ & $6.33\times10^{-5}$ \\
 Ground Truth &$4.25\times10^{9} \pm 1.52\times10^{9}$ & $0.0$ \\ [0.5ex] 
 \hline
\end{tabular}
}}
\caption{Expectation and standard deviation of preference $J_T$, and mean squared error of $\dot \varphi_T/s$ for trading horizon $T = 10$ days with calibrated parameters under quadratic costs.}
\label{table: Single_q=2_TR=10}
\end{table}
\vspace{-15pt}
\begin{figure}[H]
    \centering
    \includegraphics[width=0.8\linewidth]{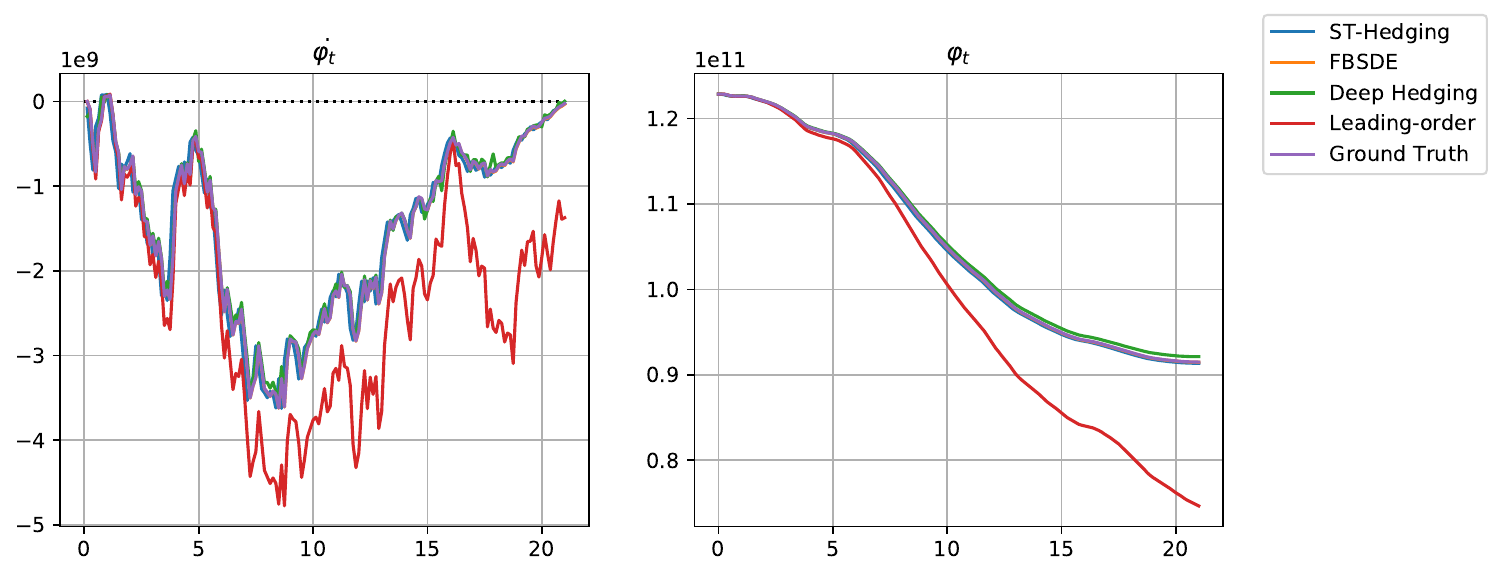}
    \caption{Optimal trading rates $\dot{\varphi}$ (left panel) and optimal positions $\varphi$ (right panel) for trading horizon $T = 21$ days with calibrated parameters under quadratic costs.}
    \label{fig: Single_q=2_TR=21}
\end{figure}

\begin{table}[H]
\centering
{\footnotesize{
\begin{tabular}{||c  c c||} 
 \hline
 Method &  $J_T (\dot\varphi) \pm \mathrm{std}$ &$\E[|\dot\varphi_T|^2/s^2]$\\ [0.5ex] 
 \hline\hline  
ST-Hedging &$4.13\times 10^{9} \pm 2.19 \times10^{9}$ & $1.26\times10^{-8}$  \\
Deep Hedging & $4.13\times 10^{9} \pm 2.20\times10^{9}$ & $3.62\times 10^{-9}$  \\
 FBSDE Solver & $4.13\times 10^{9} \pm 2.20\times10^{9}$ & $1.35\times10^{-8}$ \\
 Leading-order Approximation  & $4.06\times 10^{9}\pm 2.21\times 10^{9}$ & $7.89\times 10^{-5}$ \\
 Ground Truth  & $4.13\times 10^{9}\pm 2.20\times 10^{9}$ & $0.0$ \\ [0.5ex] 
 \hline
\end{tabular}
}}
\caption{Expectation and standard deviation of preference $J_T$, and mean squared error of $\dot \varphi_T/s$ for trading horizon $T = 21$ days with calibrated parameters under quadratic costs.}
\label{table: Single_q=2_TR=21}
\end{table}

\begin{figure}[H]
    \centering
    \includegraphics[width=0.8\linewidth]{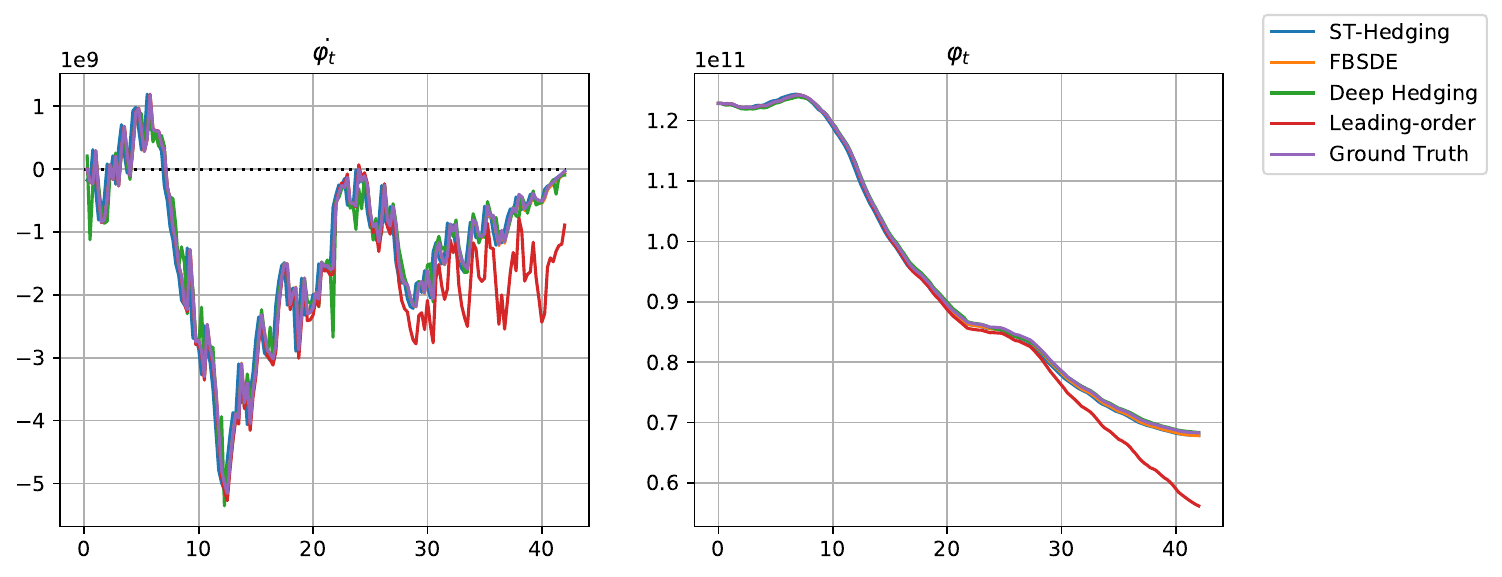}
    \caption{Optimal trading rates $\dot{\varphi}$ (left panel) and optimal positions $\varphi$ (right panel) for trading horizon $T = 42$ days with calibrated parameters under quadratic costs.}
    \label{fig: Single_q=2_TR=42}
\end{figure}
\begin{table}[H]
\centering
{\footnotesize{
\begin{tabular}{|| c c c||} 
 \hline
 Method &  $J_T (\dot\varphi) \pm \mathrm{std}$ &$\E[|\dot\varphi_T|^2/s^2]$\\ [0.5ex] 
 \hline\hline  
ST-Hedging &$3.97 \times 10^{9} \pm 3.22 \times10^{9}$ & $5.67 \times10^{-8}$  \\
Deep Hedging &$3.96\times10^{9} \pm 3.24\times10^{9}$ & $2.16\times10^{-7}$   \\
 FBSDE Solver &  $3.97\times10^{9} \pm  3.24\times10^{9}$ & $1.04\times10^{-7}$  \\
 Leading-order Approximation & $3.93\times10^{9} \pm 3.24\times10^{9}$ & $	8.56\times10^{-5}$  \\
 Ground Truth & $3.97\times10^{9} \pm  3.24\times10^{9}$ & $0.0$  \\ [0.5ex] 
 \hline
\end{tabular}
}}
\caption{Expectation and standard deviation of preference $J_T$, and mean squared error of $\dot \varphi_T/s$ for trading horizon $T = 42$ days with calibrated parameters under quadratic costs.}
\label{table: Single_q=2_TR=42}
\end{table}

\begin{figure}[H]
    \centering
    \includegraphics[width=0.8\linewidth]{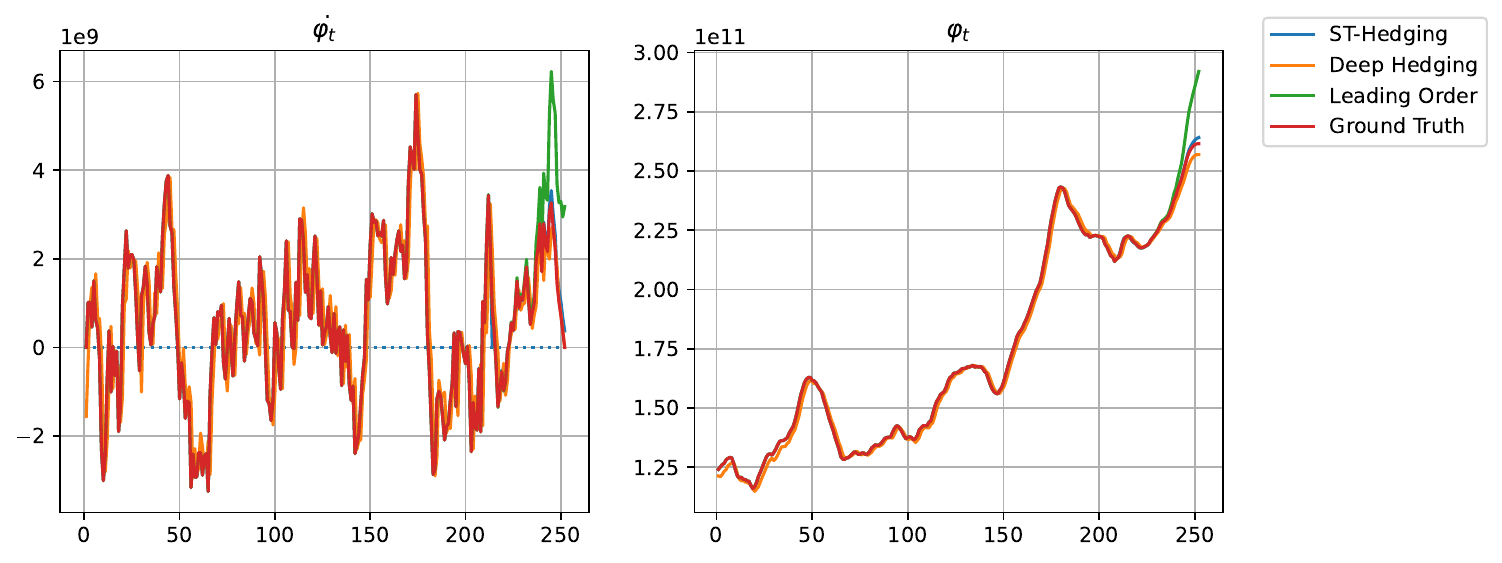}
    \caption{Optimal trading rates $\dot{\varphi}$ (left panel) and optimal positions $\varphi$ (right panel) for trading horizon $T = 252$ days with calibrated parameters under quadratic costs.}
    \label{fig: Single_q=2_TR=252}
\end{figure}
\begin{table}[H]
\centering
{\footnotesize{
\begin{tabular}{||c c c||} 
 \hline
 Method &  $J_T (\dot\varphi) \pm \mathrm{std}$ &$\E[|\dot\varphi_T|^2/s^2]$\\ [0.5ex] 
 \hline\hline  
ST-Hedging &	$3.91\times10^{9}\pm 8.61\times 10^{9}$ & $8.46 \times10^{-7}$   \\
Deep Hedging &	$3.87\times10^{9}\pm 8.59\times 10^{9}$ & $7.00\times10^{-8}$   \\
 FBSDE Solver & NaN & NaN  \\
 Leading-order Approximation & $3.91\times10^{9} \pm 8.61\times 10^{9}$ & $8.63\times10^{-5}$  \\
 Ground Truth & $3.91\times 10^{9} \pm  8.61\times 10^{9}$ & $0.0$  \\ [0.5ex] 
 \hline
\end{tabular}
}}
\caption{Expectation and standard deviation of preference $J_T$, and mean squared error of $\dot \varphi_T/s$ for trading horizon $T = 252$ days with calibrated parameters under quadratic costs. }
\label{table: Single_q=2_TR=252}
\end{table}

\begin{figure}[H]
    \centering
    \includegraphics[width=0.8\linewidth]{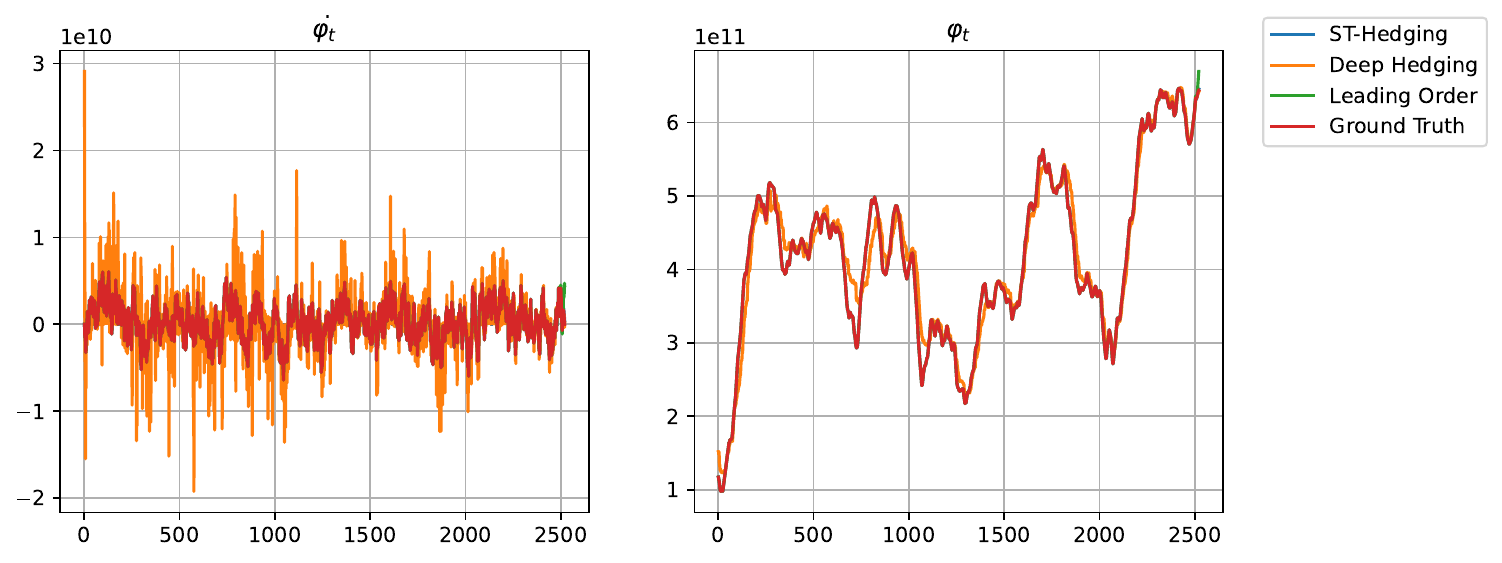}
    \caption{Optimal trading rates $\dot{\varphi}$ (left panel) and optimal positions $\varphi$ (right panel) for trading horizon $T = 2520$ days with calibrated parameters under quadratic costs.}
    \label{fig: Single_q=2_TR=2520}
\end{figure}
\begin{table}[H]
\centering
{\footnotesize{
\begin{tabular}{||c c c||} 
 \hline
 Method &  $J_T (\dot\varphi) \pm \mathrm{std}$ &$\E[|\dot\varphi_T|^2/s^2]$\\ [0.5ex] 
 \hline\hline  
ST-Hedging &	$3.87\times10^{9}\pm 2.47\times 10^{10}$ & $8.46 \times10^{-7}$   \\
Deep Hedging &	$3.32\times10^{9}\pm 2.47\times 10^{10}$ & $2.34\times10^{-7}$   \\
 FBSDE Solver & NaN & NaN  \\
 Leading-order Approximation & $3.87\times10^{9} \pm 2.47\times 10^{10}$ & $8.63\times10^{-5}$  \\
 Ground Truth & $3.87\times 10^{9} \pm  2.47\times 10^{10}$ & $0.0$  \\ [0.5ex] 
 \hline
\end{tabular}
}}
\caption{Expectation and standard deviation of preference $J_T$, and mean squared error of $\dot \varphi_T/s$ for trading horizon $T = 2520$ days with calibrated parameters under quadratic costs.}
\label{table:Single_q=2_TR=2520}
\end{table}

\subsection{Power Transaction Costs $G(x) =|x|^q/q $ with q=${3}/{2}$. }\label{exp:3/2}
The analogous experiments for power costs with $q = 3/2$  are also done with 
 calibrated parameters from Section 5 in~\cite{gonon2021asset}, where the endowment volatility is $\xi_{1.5} = 2.33\times10^{10}$ and liquidity level  $\lambda_{1.5} = 5.22\times10^{-6}$. 
In the general superlinear power transaction costs case, i.e. $G(x) = |x|^q/q$ with $q\in(1,2)$, the ground truth is no longer available in closed form. Nevertheless, we can still compare the numerical results from the machine learning algorithms and our leading-order asymptotic results. The models are evaluated using 100 million sample paths in order to account for the large Monte Carlo error from long trading horizons.

With the dynamics of the system being nonlinear,  the FBSDE solver converges on even shorter trading horizons, i.e., $T<21$ trading days, shown  in Figure~\ref{fig: Single_q=1.5_TR=10} - \ref{fig: Single_q=1.5_TR=21} and Table~\ref{table: Single_q=1.5_TR=10} - \ref{table: Single_q=1.5_TR=21}. When the trading horizon is large, i.e., $T \geq 42$ days, the FBSDE solver fails to converge.
The Deep Hedging algorithm obtains similar results as the FBSDE solver with short time horizons, as shown in Figure~\ref{fig: Single_q=1.5_TR=10} - \ref{fig: Single_q=1.5_TR=21}  and Table~\ref{table: Single_q=1.5_TR=10} - \ref{table: Single_q=1.5_TR=21}. 
With longer time horizons, the Deep Hedging algorithm can still work when the FBSDE solver fails to converge, see Figure~\ref{fig: Single_q=1.5_TR=42} - \ref{fig: Single_q=1.5_TR=2520} 
and Table~\ref{table: Single_q=1.5_TR=42} - \ref{table: Single_q=1.5_TR=2520}. However, when the trading horizon is as long as 10 trading  years, Deep Hedging becomes suboptimal because of the large variance of the position $\varphi_t$. 
Similar as in the quadratic costs case, with the trading horizon increasing, the performance of the leading-order approximation is getting closer to the optimal results learnt by the machine learning algorithm, which justifies the approximation result of the leading-order approximation~\eqref{leading order: general} empirically. 
ST-Hedging consistently achieves the best performance utility across all trading horizons (Table~\ref{table: Single_q=1.5_TR=10} - \ref{table: Single_q=1.5_TR=2520}). In short trading horizons, it achieves similar results as deep hedging (Figure~\ref{fig: Single_q=1.5_TR=10} - \ref{fig: Single_q=1.5_TR=42}). As the trading horizon gets larger, the noise of Deep Hedging becomes more and more severe, while ST-Hedging is still stable (Figure~\ref{fig: Single_q=1.5_TR=252} - \ref{fig: Single_q=1.5_TR=2520}).
The training and fine tuning time for these algorithms are similar to the quadratic costs case, with ST-Hedging requires significantly less training time and effort for tuning than the other methods.  

\begin{figure}[H]
    \centering
    \includegraphics[width=0.8\linewidth]{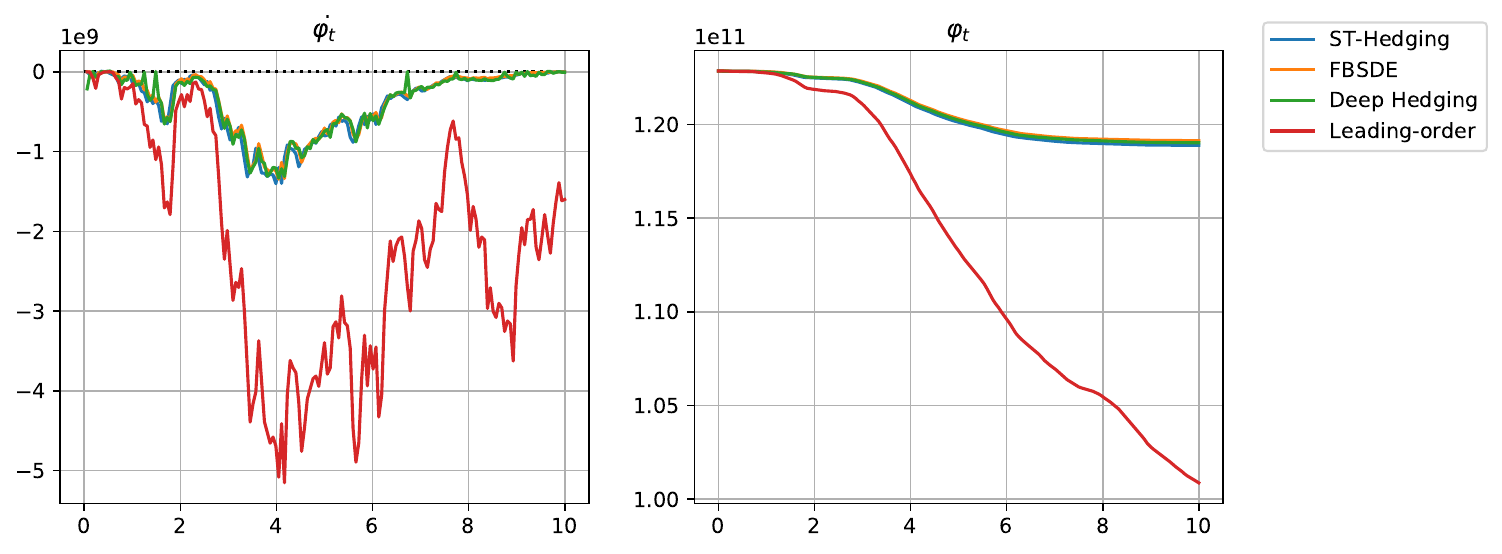}
    \caption{Optimal trading rates $\dot{\varphi}$ (left panel) and optimal positions $\varphi$ (right panel) for trading horizon $T = 10$ days with calibrated parameters under $q=3/2$ costs.}

    \label{fig: Single_q=1.5_TR=10}
\end{figure}
\begin{table}[H]
\centering
{\footnotesize{
\begin{tabular}{||c c c||} 
 \hline
Method &  $J_T (\dot\varphi) \pm \mathrm{std}$ &$\E[|\dot\varphi_T|^2/s^2]$\\ [0.5ex] 
 \hline\hline  
ST-Hedging &$4.22\times 10^{9} \pm 1.62 \times10^{9}$ & $3.62\times10^{-10}$  \\
Deep Hedging &	$4.22\times10^{9}\pm 1.62\times10^{9}$ & $3.22\times10^{-10}$  \\
 FBSDE Solver & $4.22\times10^9\pm 1.62\times10^{9}$ & $2.02\times 10^{-10}$  \\
 Leading-order Approximation & $4.11\times10^{9}\pm 1.54\times10^{9}$ & $8.76 \times10^{-5}$  \\ [0.5ex] 
 \hline
\end{tabular}
}}
\caption{Expectation and standard deviation of preference $J_T$, and mean squared error of $\dot \varphi_T/s$ for trading horizon $T = 10$ days with calibrated parameters under $q=3/2$ costs.}
\label{table: Single_q=1.5_TR=10}
\end{table}

\begin{figure}[H]
    \centering
    \includegraphics[width=0.8\linewidth]{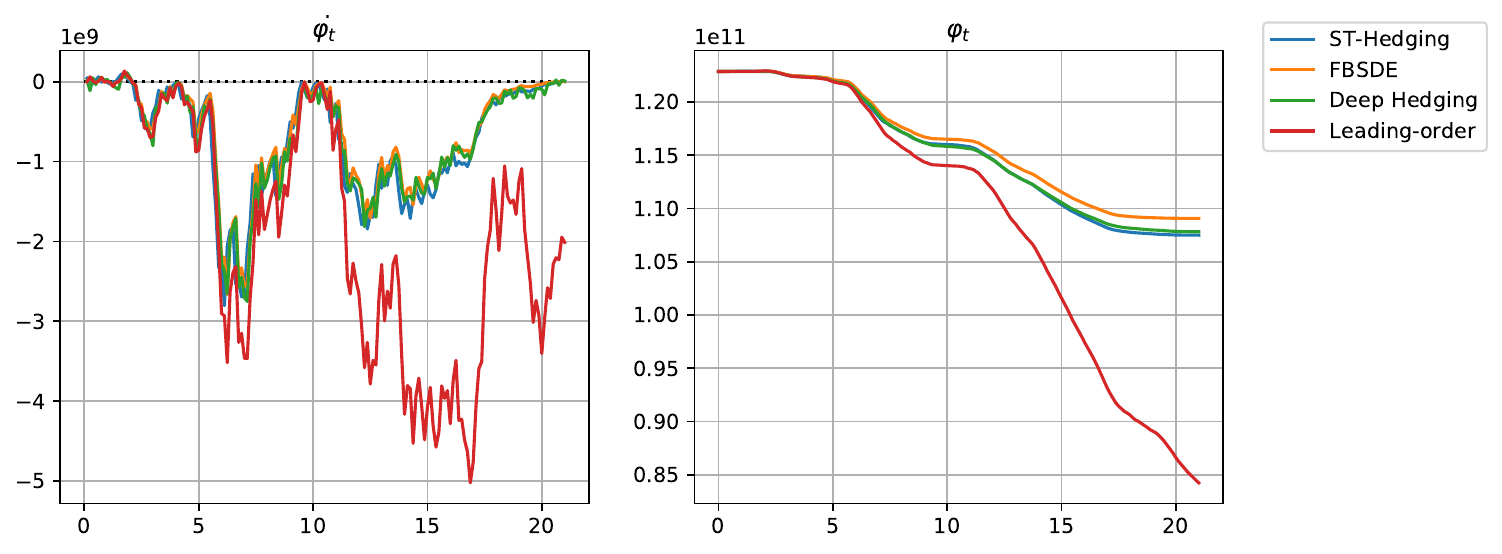}
    \caption{Optimal trading rates $\dot{\varphi}$ (left panel) and optimal positions $\varphi$ (right panel) for trading horizon $T = 21$ days with calibrated parameters under $q=3/2$ costs.}
    \label{fig: Single_q=1.5_TR=21}
\end{figure}
\begin{table}[H]
\centering
{\footnotesize{
\begin{tabular}{||c c c ||} 
 \hline
 Method &  $J_T (\dot\varphi) \pm \mathrm{std}$ &$\E[|\dot\varphi_T|^2/s^2]$\\ [0.5ex] 
 \hline\hline  
ST-Hedging &$4.02\times 10^{9} \pm 2.40 \times10^{9}$ & $1.34\times10^{-10}$  \\
Deep Hedging & $4.02\times 10^9\pm 2.42\times 10^9$ & $1.68\times10^{-9}$   \\
 FBSDE Solver &   $4.02\times 10^9 \pm  2.42\times 10^9$ & $4.55\times10^{-9}$  \\
 Leading-order Approximation & $3.93\times 10^9\pm 2.42\times 10^9$  & $1.10\times10^{-4}$  \\ [0.5ex] 
 \hline
\end{tabular}
}}
\caption{Expectation and standard deviation of preference $J_T$, and mean squared error of $\dot \varphi_T/s$ for trading horizon $T = 21$ days with calibrated parameters under $q=3/2$ costs.}
\label{table: Single_q=1.5_TR=21}
\end{table}

\begin{figure}[H]
    \centering
    \includegraphics[width=0.8\linewidth]{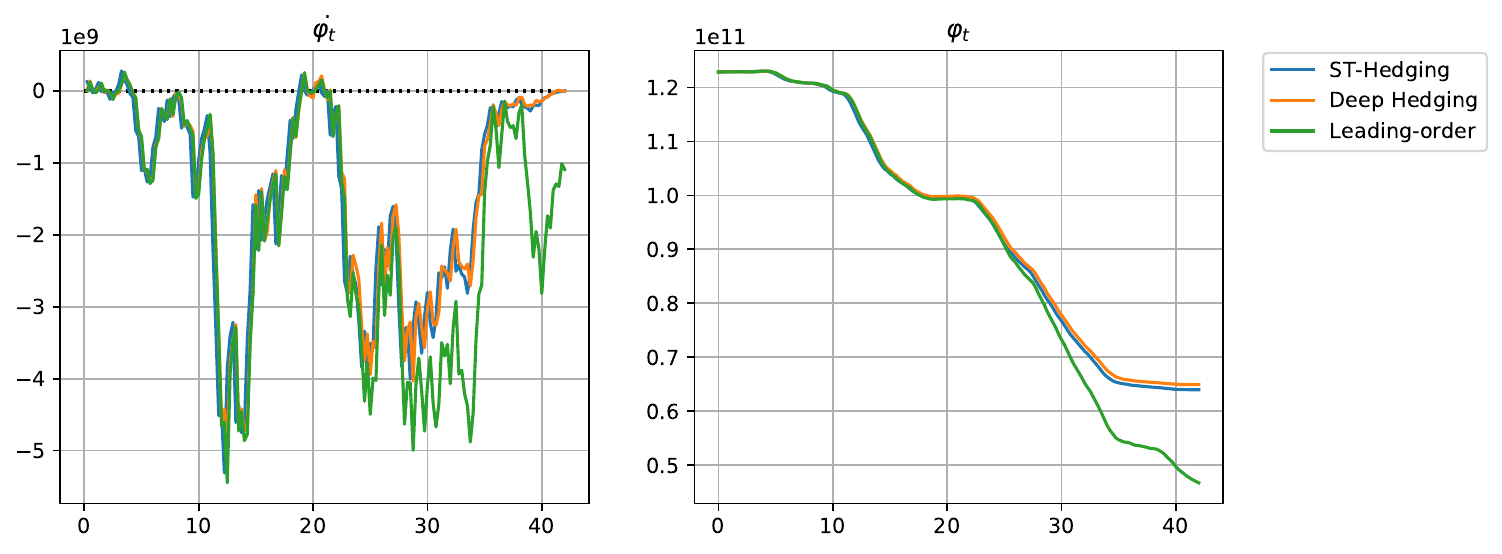}
    \caption{Optimal trading rates $\dot{\varphi}$ (left panel) and optimal positions $\varphi$ (right panel) for trading horizon $T = 42$ days with calibrated parameters under $q=3/2$ costs.}
    \label{fig: Single_q=1.5_TR=42}
\end{figure}
\begin{table}[H]
\centering
{\footnotesize{
\begin{tabular}{||c c c||} 
 \hline
 Method &  $J_T (\dot\varphi) \pm \mathrm{std}$ &$\E[|\dot\varphi_T|^2/s^2]$\\ [0.5ex] 
 \hline\hline  
ST-Hedging &$3.89 \times 10^{9} \pm 3.37 \times10^{9}$ & $1.36\times10^{-8}$  \\
Deep Hedging &	$3.89\times 10^9\pm 3.39\times 10^9$ & $1.53\times10^{-8}$  \\
 FBSDE Solver & NaN &NaN \\
 Leading-order Approximation & $3.84\times 10^9 \pm 3.39\times 10^9$ & $1.11\times10^{-4}$  \\ [0.5ex] 
 \hline
\end{tabular}
}}
\caption{Expectation and standard deviation of preference $J_T$, and mean squared error of $\dot \varphi_T/s$ for trading horizon $T = 42$ days with calibrated parameters under $q=3/2$ costs.}

\label{table: Single_q=1.5_TR=42}
\end{table}

\begin{figure}[H]
    \centering
    \includegraphics[width=0.8\linewidth]{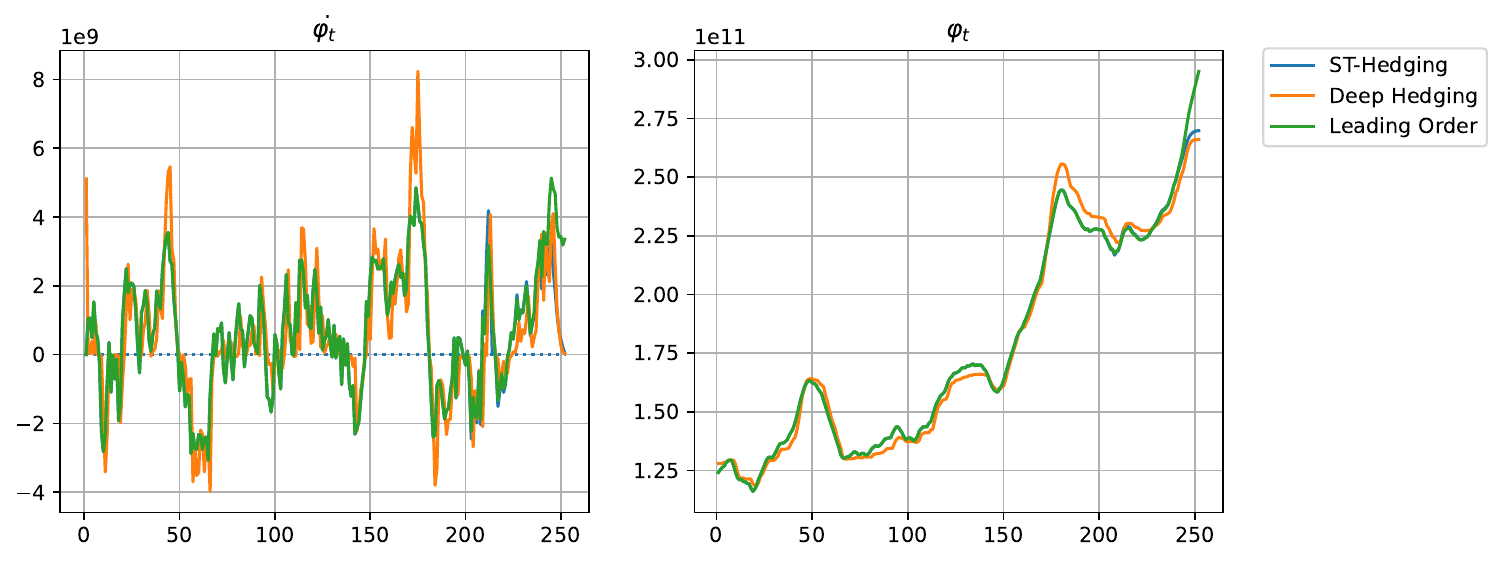}
    \caption{Optimal trading rates $\dot{\varphi}$ (left panel) and optimal positions $\varphi$ (right panel) for trading horizon $T = 252$ days with calibrated parameters under $q=3/2$ costs.}
    \label{fig: Single_q=1.5_TR=252}
\end{figure}
\begin{table}[H]
\centering
{\footnotesize{
\begin{tabular}{||c c  c||} 
 \hline
Method &  $J_T (\dot\varphi) \pm \mathrm{std}$ &$\E[|\dot\varphi_T|^2/s^2]$\\ [0.5ex] 
 \hline\hline  
ST-Hedging &	$3.84\times10^{9}\pm 9.18\times 10^{9}$ & $1.81 \times10^{-8}$   \\
Deep Hedging &	$3.84\times 10^9 \pm 9.18\times 10^9$ & $9.27\times 10^{-9}$   \\
 FBSDE Solver & NaN & NaN \\
 Leading-order Approximation & $3.84\times 10^9 \pm 9.18\times 10^9$  & $7.93\times 10^{-5}$  \\ [0.5ex] 
 \hline
\end{tabular}
}}
\caption{Expectation and standard deviation of preference $J_T$, and mean squared error of $\dot \varphi_T/s$ for trading horizon $T = 252$ days with calibrated parameters under $q=3/2$ costs.}
\label{table: Single_q=1.5_TR=252}
\end{table}
%
\begin{figure}[H]
    \centering
    \includegraphics[width=0.8\linewidth]{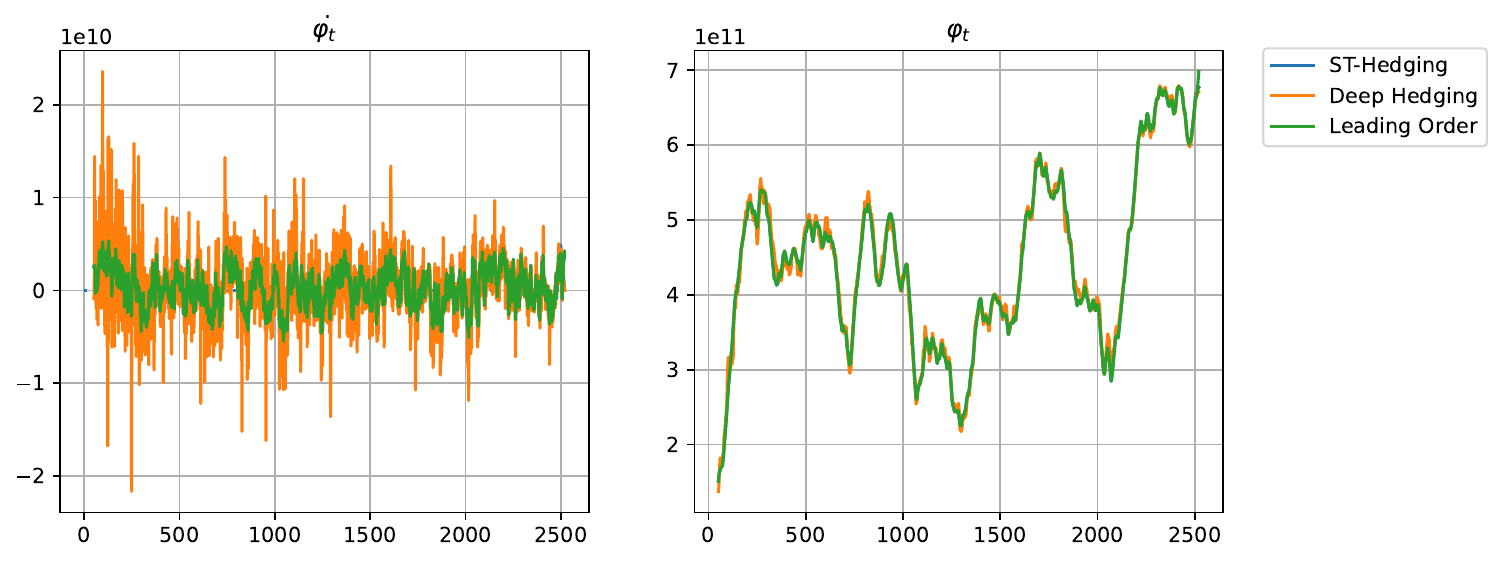}
    \caption{Optimal trading rates $\dot{\varphi}$ (left panel) and optimal positions $\varphi$ (right panel) for trading horizon $T = 2520$ days with calibrated parameters under $q=3/2$ costs.}
    \label{fig: Single_q=1.5_TR=2520}
\end{figure}
\begin{table}[H]
\centering
{\footnotesize{
\begin{tabular}{||c c  c||} 
 \hline
Method &  $J_T (\dot\varphi) \pm \mathrm{std}$ &$\E[|\dot\varphi_T|^2/s^2]$\\ [0.5ex] 
 \hline\hline  
ST-Hedging & $3.78\times 10^{9} \pm 2.59\times 10^{10}$  & $1.89\times 10^{-8}$  \\
Deep Hedging &	$3.59\times 10^{9} \pm 2.59\times 10^{10}$ & $8.38\times 10^{-8}$   \\
 FBSDE Solver & NaN & NaN \\
 Leading-order Approximation & $3.78\times 10^{9} \pm 2.59\times 10^{10}$  & $8.24\times 10^{-5}$  \\ [0.5ex] 
 \hline
\end{tabular}
}}
\caption{Expectation and standard deviation of preference $J_T$, and mean squared error of $\dot \varphi_T/s$ for trading horizon $T = 2520$ days with calibrated parameters under $q=3/2$ costs.}
\label{table: Single_q=1.5_TR=2520}
\end{table}

\subsection{Multi-asset Example via ST-Hedging}\label{exp:multi}

To illustrate the scalability of our ST-Hedging algorithm, we consider a model with three risky assets in the market with \emph{cross sectional effect}. In this experiment, the risky assets are driven by a 3-dim Brownian motion, and their corresponding annualized arithmetic variance is 
$$
\Sigma = \begin{pmatrix}
72.00 & 71.49 & 54.80\\
71.49 & 85.42 & 65.86\\
54.80 & 65.86 & 56.84\\
\end{pmatrix}. 
$$
The three stocks have outstanding shares are $1.15\times 10^{10}$, $3.2\times 10^9$ and $2.3\times 10^9$, respectively. The annualized arithmetic returns for the three stocks are set to be $2.99$, $3.71$, and $3.55$. 
Further, the risk aversion is set to be  $\gamma = 7.424 \times 10^{-13}$, and the endowment volatility matrix is set to be
$$\xi = \begin{pmatrix}
-2.07  & 1.91 & 0.64 \\
1.91  & -1.77 & -0.59 \\
0.64  & -0.59 & -0.20 \\
\end{pmatrix}\times 10^{9}.
$$
The transaction costs parameters are set as $1.269\times 10^{-9}$, $1.354\times 10^{-9}$, and $1.595\times 10^{-9}$. Finally, the trading time horizon is set to be $T = 2520$ trading days, which is 10 trading years, and the switching threshold we choose is $100$ days before maturity. 

The performance of the ST-Hedging algorithm and the comparisons to the leading-order formula and the ground truth are shown in Figure~\ref{fig: Multi_q=2_TR=2520} and Table~\ref{table: Multi_q=2_TR=2520_pasting}.
We can easily see that the overall performance of ST-Hedging is comparable with the leading-order approximation, and can also accurately learn the trading strategy when it is close to maturity. Moreover, the training time for the ST-Hedging algorithm is significantly smaller than the Deep Hedging algorithm, with better scalability of the number of stocks in the market. 
The training and fine tuning time for multiple assets is still 1-2 hours, which shows the scalability of ST-Hedging algorithm with  dimensions.
\begin{figure}[H]
    \centering
    \includegraphics[width=0.45\linewidth]{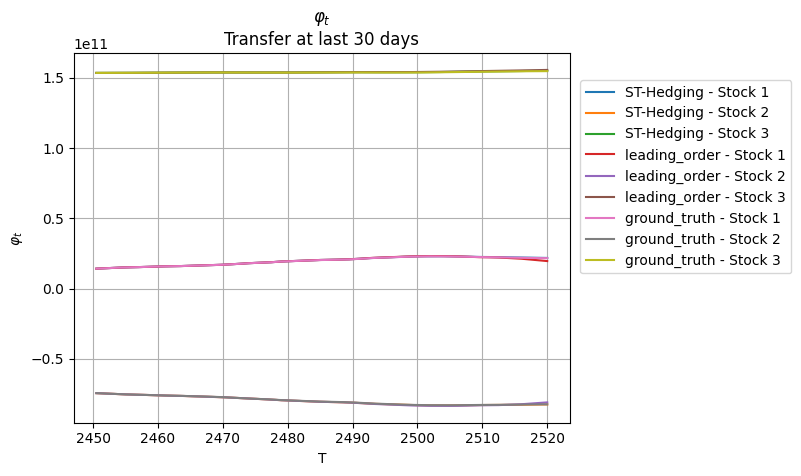}
    \includegraphics[width=0.45\linewidth]{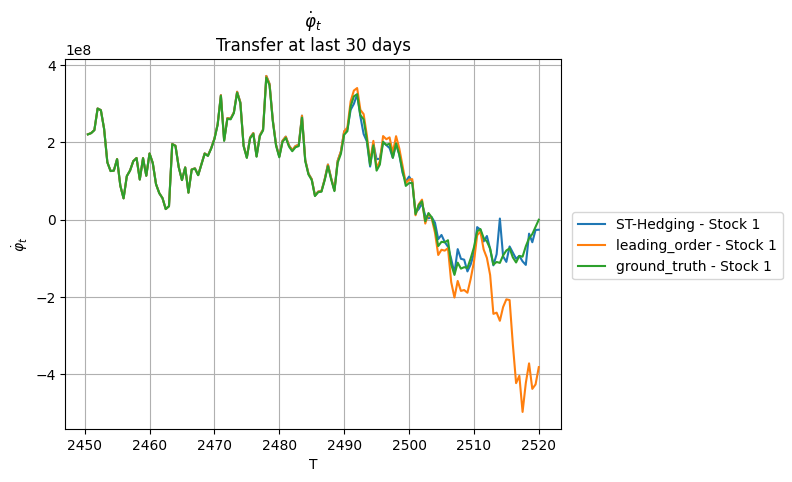}
    \includegraphics[width=0.45\linewidth]{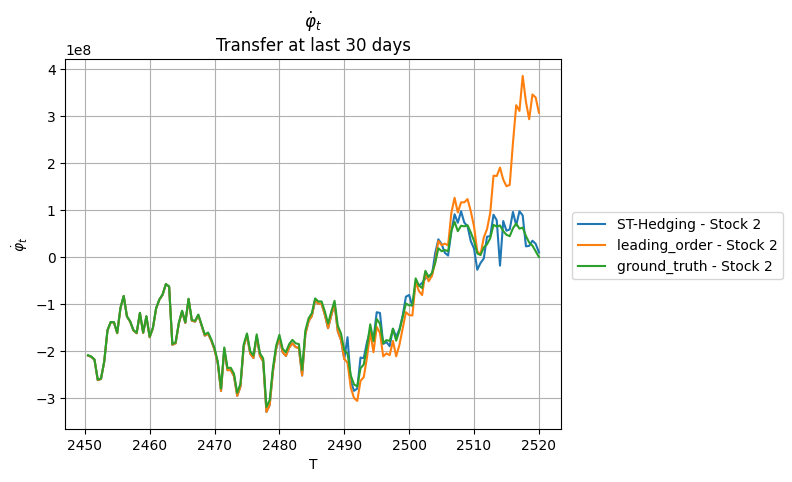}
    \includegraphics[width=0.45\linewidth]{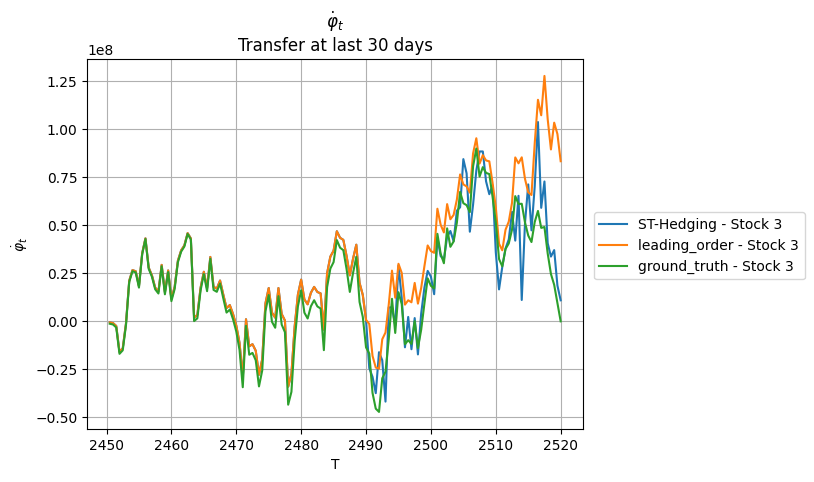}
    \caption{{Optimal trading positions} $\varphi$ (upper left panel) and {optimal trading rates} $\dot{\varphi}$ for trading horizon $T = 2520$ days with calibrated parameters under $q=2$ costs for 3 stocks using ST-Hedging.}
    \label{fig: Multi_q=2_TR=2520}
\end{figure}
\begin{table}[H]
\centering
\begin{tabular}{||c c c||} 
 \hline
 Method &  $J_T (\dot\varphi) \pm \mathrm{std}$ &$\E[|\dot\varphi_T|^2/s^2]$\\ [0.5ex] 
 \hline\hline  
ST-Hedging &	$1.60 \times 10^{11} \pm 3.58 \times 10^{6}$ & $(2.4, 4.7, 3.6)\times 10^{-9}$   \\
 Leading-order Approximation & $1.60 \times 10^{11} \pm 3.58 \times 10^{6}$ & $(4.5, 3.5, 1.6)\times 10^{-7}$  \\
 Ground Truth & $1.60 \times 10^{11} \pm 3.58 \times 10^{6}$ & $(0.0, 0.0, 0.0)$  \\ [0.5ex] 
 \hline
\end{tabular}
\caption{Expectation and standard deviation of preference $J_T$, and mean squared error of $\dot \varphi_T/s$ for trading horizon $T = 2520$ days with calibrated parameters under quadratic costs.}
\label{table: Multi_q=2_TR=2520_pasting}
\end{table}

\subsection{Comparison of Algorithms}\label{sub: comparison}
Beyond the experiments in Section~\ref{experiment: quadratic} - \ref{exp:multi},
we implement the FBSDE solver, the Deep Hedging algorithm and the ST-Hedging algorithm and test them in various market settings. The  details and codes are available here: \href{https://github.com/xf-shi/ML-for-Transaction-Costs}{https://github.com/xf-shi/ML-for-Transaction-Costs}. 

Empirically, the appearance of the extreme value in the dynamic of the FBSDE solver may affect the calculation precision and thus jeopardize the performance. In particular, the performance and the harness of tuning of the FBSDE solvers becomes significantly worse if we increase the number of forward and backward variables with respect to multiple stocks settings, because of the coupling in the FBSDE system. 
In comparison, the Deep Hedging still works well,  which shows its good scalability with respect to the increase of  dimensions. 
A major drawback of both algorithm is they do not generalize well to long trading horizons. The FBSDE solver fails to converge even in intermediate trading time horizon, whereas the Deep Hedging algorithm still works but requires significant training time. 
To overcome the above mentioned difficulties in scalabilities, our proposed ST-Hedging algorithm fully utilizes the convenience of the leading-order formula and the accuracy of the learning-based algorithms. 
In our comparison experiments, ST-Hedging beats the performance of the leading-order formula and shows great scalability with respect to the increase of the stocks and the increase of trading time horizon, with less training time required. 
 In fact, ST-Hedging can be generalized as long as the leading-order approximation is available. These comparison results help explain the leading-order asymptotic formula and provide us with ideas on the usage of each method in practice. Moreover, it opens door to the development of model-based algorithms. 
We summarize the observations (High/Medium/Low) in Table
~\ref{table: comparison}: 
\begin{table}[H]

\centering
\begin{center}
\begin{tabular}{||l  c c c ||}
 \hline
  & FBSDE Solver & Deep Hedging & ST-Hedging \\ 
  \hline\hline
 Scalability wrt time  & Low & Medium & High \\
 Scalability wrt dimension  & Medium & High & High \\
Convergence speed  & High (if converges) & Medium & High\\
Hardness of tuning  & Low (if converges) & Medium & Low \\
Sensitivity to calculation precision & High & Medium & Medium\\
Adaptivity to multiple stocks  & Medium & High & High\\
Adaptivity to  general utility functions  & Low & High & High\\
 Adaptivity to general market dynamics & Medium & High & Medium \\
\hline
\end{tabular}
\caption{Comparison of Learning-Based Algorithms} \label{table: comparison}
\end{center}

\end{table}


\section{Conclusion and Looking-forward}\label{conclusion}

This paper studies the optimal  hedging problems in frictional markets with general convex transaction costs on the trading rates, and propose the ST-Hedging algorithm to fully utilize the leading-order asymptotic formulas and the deep learning-based algorithms. 
Under the smallness assumption on the magnitude of the transaction costs, the leading-order approximation of the optimal trading speed can be identified through the solution to a nonlinear SDE. Models with arbitrary state dynamics generally lead to a nonlinear FBSDE system, but we can still solve the optimization numerically through learning-based algorithms. 
We implement the FBSDE solver, the  Deep Hedging algorithms and the ST-Hedging algorithms and compare their performances with the leading-order approximation. The leading-order approximation works well under long trading horizons, while it deviates from the ground truth by a significant amount only shortly before maturity. The FBSDE solver only performs well under short trading horizons, while it fails to work in long trading horizons. In contrast, the Deep Hedging algorithm has stable and reliable performance for short and intermediate trading horizons, while it starts to get unstable when the trading horizon is getting large. In addition, the hyperparameter tuning of the Deep Hedging algorithm becomes inefficient when the model has more sophisticated dynamics. With the development of transfer learning concept and the initial choice given by $O(\sqrt{\lambda}/T)$, the proposed ST-Hedging algorithm fully utilizes the leading-order asymptotic formula and learning-based algorithms. 
Indeed, ST-Hedging shows the best performance in all our experiments, with a drastically reduced training time. 
Moreover, the success of ST-Hedging algorithm shows great potential in  model-based algorithms, to leverage the knowledge from the domain experts and the accuracy of learning-based algorithms. 
This work is also a preliminary documents for learning-based numerical solution to frictional equilibrium models. 

\appendix
\bibliographystyle{abbrv}
\bibliography{reference_ML.bib}

\begin{thebibliography}{10}

\bibitem{acharya.pedersen.05}
V.~V. Acharya and L.~H. Pedersen.
\newblock Asset pricing with liquidity risk.
\newblock {\em J. Financ. Econ.}, 77(2):375--410, 2005.

\bibitem{Ahrens2015OnUS}
L.~Ahrens.
\newblock On using shadow prices for the asymptotic analysis of portfolio
  optimization under proportional transaction costs.
\newblock 2015.

\bibitem{almgren.03}
R.~F. Almgren.
\newblock Optimal execution with nonlinear impact functions and
  trading-enhanced risk.
\newblock {\em Appl. Math. Finance}, 10(1):1--18, 2003.

\bibitem{almgren.chriss.01}
R.~F. Almgren and N.~Chriss.
\newblock Optimal execution of portfolio transactions.
\newblock {\em J. Risk}, 3:5--40, 2001.

\bibitem{almgren.li.16}
R.~F. Almgren and T.~M. Li.
\newblock Option hedging with smooth market impact.
\newblock {\em Market Microstucture Liq.}, 2(1), 2016.

\bibitem{almgren.al.05}
R.~F. Almgren, C.~Thum, E.~Hauptmann, and H.~Li.
\newblock Direct estimation of equity market impact.
\newblock {\em RISK}, July, 2005.

\bibitem{amihud.al.06}
Y.~Amihud, H.~Mendelson, and L.~H. Pedersen.
\newblock Liquidity and asset prices.
\newblock {\em Foundations and Trends in Finance}, 1(4):269--364, 2006.

\bibitem{annkirchner.kruse.15}
S.~Ankirchner and T.~Kruse.
\newblock Optimal position targeting with stochastic linear--quadratic costs.
\newblock {\em Banach Center Publ.}, 104(1):9--24, 2015.

\bibitem{bank.al.17}
P.~Bank, H.~M. Soner, and M.~Vo{\ss}.
\newblock Hedging with temporary price impact.
\newblock {\em Math. Fin. Econ.}, 11(2):215--239, 2017.

\bibitem{bayraktar.al.18}
E.~Bayraktar, T.~Cay\'e, and I.~Ekren.
\newblock Asymptotics for small nonlinear price impact: a {PDE} homogenization
  approach to the multidimensional case.
\newblock \emph{Math. Finance}, to appear, 2018.

\bibitem{beck2020overview}
C.~Beck, M.~Hutzenthaler, A.~Jentzen, and B.~Kuckuck.
\newblock An overview on deep learning-based approximation methods for partial
  differential equations.
\newblock {\em arXiv preprint arXiv:2012.12348}, 2020.

\bibitem{becker2019deep}
S.~Becker, P.~Cheridito, and A.~Jentzen.
\newblock Deep optimal stopping.
\newblock {\em Journal of Machine Learning Research}, 20:74, 2019.

\bibitem{bouchard.al.18}
B.~Bouchard, M.~Fukasawa, M.~Herdegen, and J.~Muhle-Karbe.
\newblock Equilibrium returns with transaction costs.
\newblock {\em Finance Stoch.}, 22(3):569--601, 2018.

\bibitem{buehler2019deep}
H.~Buehler, L.~Gonon, J.~Teichmann, and B.~Wood.
\newblock Deep hedging.
\newblock {\em Quantitative Finance}, 19(8):1271--1291, 2019.

\bibitem{buehler2019deepa}
H.~Buehler, L.~Gonon, J.~Teichmann, B.~Wood, B.~Mohan, and J.~Kochems.
\newblock Deep hedging: hedging derivatives under generic market frictions
  using reinforcement learning.
\newblock {\em Swiss Finance Institute Research Paper}, (19-80), 2019.

\bibitem{casgrain2019deep}
P.~Casgrain, B.~Ning, and S.~Jaimungal.
\newblock Deep q-learning for nash equilibria: Nash-dqn.
\newblock {\em arXiv preprint arXiv:1904.10554}, 2019.

\bibitem{caye.al.18}
T.~Cay\'e, M.~Herdegen, and J.~Muhle-Karbe.
\newblock Trading with small nonlinear price impact.
\newblock \emph{Ann. Appl. Probab.}, to appear, 2018.

\bibitem{caye2017trading}
T.~Cay{\'e}, M.~Herdegen, and J.~Muhle-Karbe.
\newblock Trading with small nonlinear price impact.
\newblock \emph{Ann. Appl. Probab.}, to appear, 2019.

\bibitem{choi.larsen.15}
J.~H. Choi and K.~Larsen.
\newblock Taylor approximation of incomplete {R}adner equilibrium models.
\newblock {\em Finance Stoch.}, 19(3):653--679, 2015.

\bibitem{bouchaud.al.12}
J.~De~Lataillade, C.~Deremble, M.~Potters, and J.-P. Bouchaud.
\newblock Optimal trading with linear costs.
\newblock {\em RISK}, 1(3):2047--1246, 2012.

\bibitem{delarue2002existence}
F.~Delarue.
\newblock On the existence and uniqueness of solutions to fbsdes in a
  non-degenerate case.
\newblock {\em Stochastic processes and their applications}, 99(2):209--286,
  2002.

\bibitem{dumas.luciano.91}
B.~Dumas and E.~Luciano.
\newblock An exact solution to a dynamic portfolio choice problem under
  transactions costs.
\newblock {\em J. Finance}, 46(2):577--595, 1991.

\bibitem{ekeland.temam.99}
I.~Ekeland and R.~Temam.
\newblock {\em Convex analysis and variational problems}.
\newblock SIAM, Philadelphia, PA, 1999.

\bibitem{garleanu.pedersen.13}
N.~Garleanu and L.~H. Pedersen.
\newblock Dynamic trading with predictable returns and transaction costs.
\newblock {\em J. Finance}, 68(6):2309--2340, 2013.

\bibitem{garleanu.pedersen.16}
N.~Garleanu and L.~H. Pedersen.
\newblock Dynamic portfolio choice with frictions.
\newblock {\em J. Econ. Theory}, 165:487--516, 2016.

\bibitem{gonon2021asset}
L.~Gonon, J.~Muhle-Karbe, and X.~Shi.
\newblock Asset pricing with general transaction costs: Theory and numerics.
\newblock {\em Mathematical Finance}, 31(2):595--648, 2021.

\bibitem{Goodfellow2016}
I.~Goodfellow, Y.~Bengio, and A.~Courville.
\newblock {\em Deep Learning}.
\newblock MIT Press, Cambridge, MA, 2016.

\bibitem{Grohs2018}
P.~Grohs, F.~Hornung, A.~Jentzen, and P.~von Wurstemberger.
\newblock A proof that artificial neural networks overcome the curse of
  dimensionality in the numerical approximation of black-scholes partial
  differential equations.
\newblock Preprint, 2018.

\bibitem{guasoni.rasonyi.15}
P.~Guasoni, M.~R{\'a}sonyi, et~al.
\newblock Hedging, arbitrage and optimality with superlinear frictions.
\newblock {\em Ann. Appl. Probab.}, 25(4):2066--2095, 2015.

\bibitem{guasoni.weber.17}
P.~Guasoni and M.~Weber.
\newblock Dynamic trading volume.
\newblock {\em Math. Finance}, 27(2):313--349, 2017.

\bibitem{guasoni.weber.18}
P.~Guasoni and M.~H. Weber.
\newblock Nonlinear price impact and portfolio choice.
\newblock {\em Mathematical Finance}, 30(2):341--376, 2020.

\bibitem{han.al.17}
J.~Han, A.~Jentzen, and W.~E.
\newblock Solving high-dimensional partial differential equations using deep
  learning.
\newblock {\em {Proc. Natl. Acad. Sci.}}, 115(34):8505--8510, 2018.

\bibitem{han2020convergence}
J.~Han and J.~Long.
\newblock Convergence of the deep bsde method for coupled fbsdes.
\newblock {\em Probability, Uncertainty and Quantitative Risk}, 5(1):1--33,
  2020.

\bibitem{han2016deep}
J.~Han and E.~Weinan.
\newblock Deep learning approximation for stochastic control problems.
\newblock 2016.

\bibitem{herdegen2018stability}
M.~Herdegen and J.~Muhle-Karbe.
\newblock Stability of radner equilibria with respect to small frictions.
\newblock {\em Finance and Stochastics}, 22(2):443--502, 2018.

\bibitem{hu2019deep}
R.~Hu.
\newblock Deep fictitious play for stochastic differential games.
\newblock {\em arXiv preprint arXiv:1903.09376}, 2019.

\bibitem{hure2018deep}
C.~Hur{\'e}, H.~Pham, A.~Bachouch, and N.~Langren{\'e}.
\newblock Deep neural networks algorithms for stochastic control problems on
  finite horizon, part i: convergence analysis.
\newblock {\em arXiv preprint arXiv:1812.04300}, 2018.

\bibitem{janecek.shreve.10}
K.~Jane{\v{c}}ek and S.~E. Shreve.
\newblock Futures trading with transaction costs.
\newblock {\em Illinois J. Math.}, 54(4):1239--1284, 2010.

\bibitem{kallsen2013portfolio}
J.~Kallsen and S.~Li.
\newblock Portfolio optimization under small transaction costs: a convex
  duality approach.
\newblock {\em arXiv preprint arXiv:1309.3479}, 2013.

\bibitem{kallsen.muhlekarbe.17}
J.~Kallsen and J.~Muhle-Karbe.
\newblock The general structure of optimal investment and consumption with
  small transaction costs.
\newblock {\em Math. Finance}, 27(3):659--703, 2017.

\bibitem{kohlmann.tang.02}
M.~Kohlmann and S.~Tang.
\newblock Global adapted solution of one-dimensional backward stochastic
  {Riccati} equations, with application to the mean--variance hedging.
\newblock {\em Stochastic Process. Appl.}, 97(2):255--288, 2002.

\bibitem{lillo.al.03}
F.~Lillo, J.~D. Farmer, and R.~N. Mantegna.
\newblock Master curve for price-impact function.
\newblock {\em Nature}, 421:129--130, 2003.

\bibitem{liu2004optimal}
H.~Liu.
\newblock Optimal consumption and investment with transaction costs and
  multiple risky assets.
\newblock {\em The Journal of Finance}, 59(1):289--338, 2004.

\bibitem{min2021signatured}
M.~Min and R.~Hu.
\newblock Signatured deep fictitious play for mean field games with common
  noise.
\newblock {\em arXiv preprint arXiv:2106.03272}, 2021.

\bibitem{Moallemi2021RL}
C.~C. Moallemi and M.~Wang.
\newblock A reinforcement learning approach to optimal execution.
\newblock 2021.

\bibitem{moreau.al.17}
L.~Moreau, J.~Muhle-Karbe, and H.~M. Soner.
\newblock Trading with small price impact.
\newblock {\em Math. Finance}, 27(2):350--400, 2017.

\bibitem{muhlekarbe2021equilibrium}
J.~Muhle-Karbe, X.~Shi, and C.~Yang.
\newblock An equilibrium model for the cross-section of liquidity premia.
\newblock Preprint, November 2020.

\bibitem{sun2020optimizing}
J.~M. Mulvey, Y.~Sun, M.~Wang, and J.~Ye.
\newblock Optimizing a portfolio of mean-reverting assets with transaction
  costs via a feedforward neural network.
\newblock {\em Quantitative Finance}, 20(8):1239--1261, 2020.

\bibitem{nevmyvaka2006reinforcement}
Y.~Nevmyvaka, Y.~Feng, and M.~Kearns.
\newblock Reinforcement learning for optimized trade execution.
\newblock In {\em Proceedings of the 23rd international conference on Machine
  learning}, pages 673--680, 2006.

\bibitem{reppen2020bias}
A.~M. Reppen and H.~M. Soner.
\newblock Bias-variance trade-off and overlearning in dynamic decision
  problems.
\newblock {\em arXiv preprint arXiv:2011.09349}, 2020.

\bibitem{ruf2021hedging}
J.~Ruf and W.~Wang.
\newblock Hedging with linear regressions and neural networks.
\newblock {\em Journal of Business \& Economic Statistics},
  (just-accepted):1--33, 2021.

\bibitem{sannikov.skrzypacz.16}
Y.~Sannikov and A.~Skrzypacz.
\newblock Dynamic trading: price inertia and tront-running.
\newblock 2016.
\newblock Preprint.

\bibitem{shi2020equilibrium}
X.~Shi.
\newblock {\em Equilibrium asset pricing with transaction costs}.
\newblock PhD thesis, Carnegie Mellon University, Pittsburgh, PA, USA, 2020.

\bibitem{shreve1994optimal}
S.~E. Shreve and H.~M. Soner.
\newblock Optimal investment and consumption with transaction costs.
\newblock {\em The Annals of Applied Probability}, pages 609--692, 1994.

\bibitem{soner.touzi.13}
H.~M. Soner and N.~Touzi.
\newblock Homogenization and asymptotics for small transaction costs.
\newblock {\em SIAM J. Control Optim.}, 51(4):2893--2921, 2013.

\bibitem{sutton2018reinforcement}
R.~S. Sutton and A.~G. Barto.
\newblock {\em Reinforcement learning: An introduction}.
\newblock MIT press, 2018.

\bibitem{wang2020reinforcement}
H.~Wang, T.~Zariphopoulou, and X.~Y. Zhou.
\newblock Reinforcement learning in continuous time and space: A stochastic
  control approach.
\newblock {\em J. Mach. Learn. Res.}, 21:198--1, 2020.

\bibitem{wang2020continuous}
H.~Wang and X.~Y. Zhou.
\newblock Continuous-time mean--variance portfolio selection: A reinforcement
  learning framework.
\newblock {\em Mathematical Finance}, 30(4):1273--1308, 2020.

\end{thebibliography}

\section{General Asymptotics Results}\label{sec:asymptotic}
As already emphasized above, a general existence proof for the FBSDE system~\eqref{eq:dphidyn} - \eqref{eq:BSDEY} remains a challenging open problem. 
To obtain tractable results with the general transaction costs under Assumption~\ref{cond:cost} and obtain explicit approximation trading strategies as in~\cite{caye.al.18,
guasoni.weber.18}, we focus on the financial market with the following assumptions:
\begin{assumption}\label{assump:market}
\begin{enumerate}[label=(\roman*)]
\item The processes $\Lambda$ and $\sigma^2$ are bounded away from zero; 
\item The processes $\bar{a}$, $\bar{b}$, $\Lambda$, and $\sigma$ are essentially bounded It\^o processes.
\end{enumerate}
\end{assumption}

In this section, we establish the asymptotic optimal strategy 
 for the frictional mean-variance preference~\eqref{general:regular goal}. 
For better readability, we introduce the construction of optimal strategy under Assumption~\ref{assump:market} in Appedix~\ref{formal approximations}. 
The proof for the main approximation result is in Appendix~\ref{proof}. 

\subsection{Asymptotically Optimal Trading Strategies}\label{formal approximations}
We show that,  under Assumption~\eqref{assump:market}, the smallness assumption on the transaction costs level $\lambda$ should also be  a relative quantity with respect to the trading time horizon $T$. 
The following two results are the main ingredients for the asymptotic trading strategy.

\paragraph{A nonlinear ODE}
The first ingredient to cook up the leading-order approximation is the solution to a nonlinear ODE, which is also essential to the analysis of~\cite{gonon2021asset}, Lemma 3.4 and the formal analysis of~\cite{shi2020equilibrium}, Lemma 2.5. 

\begin{lemma}\label{ODE}
Suppose the instantaneous transaction cost $G$ satisfies Assumption~\ref{cond:cost}. Then the ordinary differential equation
\begin{equation*}
 (G')^{-1}\left(\frac{g(x;\gamma,\sigma,\bar{a},\lambda)}{\lambda}\right)g'(x;\gamma,\sigma,\bar{a},\lambda) + \frac{\bar{a}^2}{2} g''(x;\gamma,\sigma,\bar{a},\lambda)  = \gamma\sigma^2 x. \tag{\ref{eqn:ergodic ODE}}
\end{equation*}
has a unique solution $g$ on $\R$ such that $xg(x)\leq 0$ for all $x\in\R$. Moreover, $g$ is odd, non-increasing on $\R$ and $g$ satisfies the growth conditions
\begin{equation}\label{conditions}
\lim_{x\to -\infty} \frac{g(x;\gamma,\sigma,\bar{a},\lambda)}{\lambda(G^*)^{-1}(\frac{\gamma\sigma^2}{2\lambda}x^2)} = 1, \qquad
\lim_{x\to +\infty} \frac{g(x;\gamma,\sigma,\bar{a},\lambda)}{\lambda(G^*)^{-1}(\frac{\gamma\sigma^2}{2\lambda}x^2)} = -1,
\end{equation}
where $G^*$ is the Legendre transform of $G$. 
\end{lemma}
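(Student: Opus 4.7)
The plan is to reduce \eqref{eqn:ergodic ODE} to a first-order ODE via a Legendre first integral and then isolate the desired trajectory by a shooting argument.

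\textbf{First integral.} Let $G^*$ denote the Legendre conjugate of $G$; symmetry of $G$ makes $G^*$ even and $(G^*)'=(G')^{-1}$ odd on $\R$. The chain rule gives $(G')^{-1}(g/\lambda)\,g'=\lambda\frac{d}{dx}G^*(g/\lambda)$, so integrating \eqref{eqn:ergodic ODE} from $0$ to $x$ under the ansatz $g(0)=0$ (which forces $G^*(g(0)/\lambda)=0$) yields
\begin{equation*}
\lambda G^*\!\left(\frac{g(x)}{\lambda}\right) + \frac{\bar a^2}{2}\bigl(g'(x)-\alpha\bigr) = \frac{\gamma\sigma^2}{2}x^2, \qquad \alpha := g'(0), \tag{$\star$}
\end{equation*}
a first-order ODE parametrized by the single scalar $\alpha$.

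\textbf{Shooting.} Using the polynomial bound on $(G')^{-1}$ from Assumption~\ref{cond:cost}(iii), Picard-Lindel\"of produces for each $\alpha$ a unique local solution $g_\alpha$ of $(\star)$ with $g_\alpha(0)=0$. I would then establish a dichotomy: when $\alpha$ is too negative, $\lambda G^*(g_\alpha/\lambda)$ outgrows $\gamma\sigma^2 x^2/2$, forcing $g_\alpha'\to-\infty$ at some finite $x$ (blow-up); when $\alpha$ is not negative enough (in particular when $\alpha\geq 0$, where one sees $g_\alpha(x)>0$ for small $x>0$ from $g_\alpha(0)=g_\alpha''(0)=0$ and $g_\alpha'''(0)=2\gamma\sigma^2/\bar a^2>0$), $g_\alpha$ decays too slowly, $g_\alpha'$ eventually turns positive, and $xg\leq 0$ fails. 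The separating value $\alpha^*<0$ yields the unique global trajectory on $[0,\infty)$ along which $\lambda G^*(g_{\alpha^*}/\lambda)$ balances $\gamma\sigma^2 x^2/2$ asymptotically---precisely \eqref{conditions} at $+\infty$. (Sanity check: for $G(x)=x^2/2$, $(\star)$ reduces to $g^2/(2\lambda)+(\bar a^2/2)(g'-\alpha)=\gamma\sigma^2 x^2/2$, and the linear solution $g(x)=-\sqrt{\gamma\sigma^2\lambda}\,x$ corresponds to $\alpha^*=-\sqrt{\gamma\sigma^2\lambda}$; the phase portrait makes the dichotomy transparent.)

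\textbf{Odd extension and uniqueness.} Because $G^*$ is even and $(G')^{-1}$ odd, \eqref{eqn:ergodic ODE} is invariant under $(x,g)\mapsto(-x,-g)$, so the odd extension of $g_{\alpha^*}$ from $[0,\infty)$ gives a global solution on $\R$ matching \eqref{conditions} at $-\infty$; the $C^2$ gluing at $0$ is automatic from $g_{\alpha^*}(0)=g_{\alpha^*}''(0)=0$. Monotonicity $g'\leq 0$ on $[0,\infty)$ is read from $(\star)$ in the form $g'(x)=\alpha^*+(2/\bar a^2)\bigl[\gamma\sigma^2 x^2/2-\lambda G^*(g/\lambda)\bigr]$: along the critical trajectory the bracket stays nonpositive (else the solution would escape into the blow-up regime), whence $g'\leq\alpha^*\leq 0$. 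Uniqueness among solutions with $xg\leq 0$ then follows, since any such solution must satisfy $g(0)=0$ by continuity and therefore sits on some shooting trajectory; the dichotomy rules out every $\alpha\neq\alpha^*$.

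\textbf{Main obstacle.} The technical heart is the rigorous shooting classification---proving blow-up of $g_\alpha$ for $\alpha<\alpha^*$, sign reversal for $\alpha^*<\alpha$, and identifying $\alpha^*$ as the unique separator. This demands careful a priori estimates leveraging convexity of $G^*$ together with the polynomial growth of $(G')^{-1}$ from Assumption~\ref{cond:cost}(iii), and is expected to follow the template of the analogous arguments in \cite{gonon2021asset,shi2020equilibrium} referenced in the lemma's statement.
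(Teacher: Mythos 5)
Your overall strategy---integrating \eqref{eqn:ergodic ODE} once via the identity $(G')^{-1}(g/\lambda)\,g' = \lambda\frac{d}{dx}G^*(g/\lambda)$ and then shooting in the single parameter $\alpha=g'(0)$---is the standard route for this family of ergodic ODEs, and it is consistent with what the paper actually does, namely defer entirely to the proof of Lemma 3.4 in \cite{gonon2021asset}. The first integral is correct (using $G^*(0)=0$ and the evenness of $G^*$), as are the observations that any solution with $xg(x)\le 0$ must have $g(0)=0$, that $\alpha^*<0$ (via $g'''(0)=2\gamma\sigma^2/\bar a^2>0$ when $\alpha=0$), and that the symmetry $(x,g)\mapsto(-x,-g)$ together with $g''(0)=0$ yields the odd $C^2$ extension.

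There are, however, two genuine problems. First, your monotonicity argument is wrong for non-quadratic costs. You claim that along the critical trajectory the bracket $\frac{\gamma\sigma^2}{2}x^2-\lambda G^*(g/\lambda)$ stays nonpositive, so that $g'\le\alpha^*<0$. Take $G(x)=|x|^q/q$ with $q=3/2$, so $G^*(y)=|y|^3/3$: the growth condition \eqref{conditions} forces $g(x)\sim -C x^{2/3}$, hence $g'(x)\to 0^-$, and your own first integral then gives that the bracket converges to $-\alpha^*\bar a^2/2>0$, not to something nonpositive. Moreover $g'\le\alpha^*<0$ everywhere would force at least linear decay of $g$, contradicting the sublinear growth in \eqref{conditions} whenever $G^*$ grows faster than quadratically. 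The correct elementary argument uses the second-order equation directly: at an interior critical point $x_0>0$ one has $g''(x_0)=\tfrac{2}{\bar a^2}\gamma\sigma^2 x_0>0$, so $x_0$ is a strict local minimum; one then checks that $g'>0$ on $(x_0,\infty)$ makes $g''>0$ there (since $g<0$ gives $-(G')^{-1}(g/\lambda)g'>0$), so $g$ grows at least linearly and eventually crosses zero, violating $xg\le 0$. Second, the entire analytical content of the lemma---that the blow-up and sign-reversal regimes are each nonempty and open in $\alpha$, that they are separated by exactly one value $\alpha^*$, and that the critical trajectory satisfies the precise asymptotics \eqref{conditions} (note that $\lambda G^*(g/\lambda)\sim \gamma\sigma^2x^2/2$ does not by itself yield \eqref{conditions} without some regularity of $(G^*)^{-1}$, which must be extracted from Assumption~\ref{cond:cost}(iii))---is asserted rather than proven. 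You acknowledge this, but as written the proposal is a plan for a proof rather than a proof; these a priori estimates are exactly what Lemma 3.4 of \cite{gonon2021asset} supplies and would need to be reproduced or cited step by step.
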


\begin{remark}
If the time horizon $T$ is large, then the solution to the optimal trading strategy should become stationary. Such a stationary solution should in turn solve the  essential nonlinear ODE~\eqref{eqn:ergodic ODE}. 
 Far from the terminal time $T$, it is natural to expect that the correct solution is still identified by the same growth condition in the space variable as~\eqref{conditions}.

For power functions $G(x) = |x|^q /q$, $q\in(1,2]$, the Legendre transform is 
\begin{align*}
G^*(x) &= \sup_{y} \{xy - G(y)\}
= x (G')^{-1} (x) - G\left( (G')^{-1} (x)\right)
= |x|^p/p,
\end{align*}
where $p = q/(q-1)$ is the conjugate of $q$. In this case, with proper inner and outer rescaling coefficients, ~\eqref{eqn:ergodic ODE} is exactly the \emph{same} ODE which plays an important role in Lemma 19 and Lemma 21 in~\cite{guasoni.weber.18} and Lemma 3.1 in~\cite{caye2017trading}.

\end{remark}

\paragraph{A fast mean-reverting SDE} The second ingredient is the existence and uniqueness of a strong solution to a fast mean-reverting SDE. 
\begin{lemma}\label{lem:sde}
Let $g$ be the solution to the ODE~\eqref{eqn:ergodic ODE} from Lemma~\ref{ODE}. There exists a unique strong solution of the SDE
\begin{align}\label{eqn:ergodic SDE}
d\Delta_t &= \left((G')^{-1}\left(\frac{g(\Delta_t;\gamma,\sigma_t,\bar{a}_t,\lambda\Lambda_t)}{\lambda\Lambda_t}\right) - \bar{b}_t\right)dt -\bar{a}_t dW_t, \quad 
\Delta_0 = \varphi_{0-} +\frac{\xi_0}{\sigma_0}-\frac{\mu_0}{\gamma\sigma^ 2_0}.
\end{align}
Moreover, this process is a recurrent diffusion.
 \end{lemma}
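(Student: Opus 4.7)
The plan is to read \eqref{eqn:ergodic SDE} as a one-dimensional SDE with constant non-degenerate diffusion $-\bar a$ and drift
\begin{align*}
b(x)\;:=\;(G')^{-1}\!\left(\frac{g(x)}{\lambda}\right),
\end{align*}
and to harvest the qualitative properties of $g$ recorded in Lemma~\ref{ODE} through the classical scale-and-speed machinery. Since $G'$ is strictly increasing and $C^1$ on $(0,\infty)$ with $G'(0)=0$ by Assumption~\ref{cond:cost}, the inverse $(G')^{-1}$ is continuous on $\R$ and $C^1$ on $\R\setminus\{0\}$. Combined with $g\in C^2(\R)$ (obtained by solving~\eqref{eqn:ergodic ODE} for $g''$ and invoking continuity of the remaining terms), this yields continuity of $b$ on $\R$ and local Lipschitz continuity on $\R\setminus\{0\}$. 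The symmetry of $G$ forces $(G')^{-1}$ to be odd, and together with the oddness and monotonicity of $g$ this delivers the mean-reverting sign condition $xb(x)\le 0$ for every $x\in\R$.

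Next I would establish a unique global strong solution. Weak existence on any bounded interval is standard for continuous drift with constant non-degenerate diffusion. For pathwise uniqueness, local Lipschitz continuity of $b$ away from $0$, paired with the polynomial growth bound $|(G')^{-1}(y)|\le C(1+|y|^{K-1})$ from Assumption~\ref{cond:cost}(iii) and $g(0)=0$ (so that $b$ is at worst H\"older at the origin), places us in the Yamada--Watanabe regime for one-dimensional SDEs. Non-explosion comes from the Lyapunov function $V(x)=1+x^2$: It\^o's formula together with $xb(x)\le 0$ yields
\begin{align*}
\E\!\left[V(\Delta_{t\wedge\tau_n})\right]\;\le\;V(\Delta_0)+\bar a^{\,2}\,t,
\end{align*}
so the explosion time is almost surely infinite and we obtain a unique global strong solution.

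For recurrence I would invoke Feller's test. The scale function
\begin{align*}
s(x)\;=\;\int_0^x\exp\!\left(-\frac{2}{\bar a^{\,2}}\int_0^y b(z)\,dz\right)dy
\end{align*}
satisfies $s(\pm\infty)=\pm\infty$ because the growth estimates~\eqref{conditions} force $|g(z)|$, and hence $|b(z)|$, to diverge with a definite sign as $|z|\to\infty$, driving $\int_0^y b(z)\,dz\to-\infty$ as $|y|\to\infty$ and making the exponential inside $s$ blow up superpolynomially at both endpoints. Feller's criterion then yields recurrence. The main obstacle I anticipate is the lack of Lipschitz regularity of $b$ at the origin when $G''(0+)$ fails to be positive and finite (e.g.\ strictly superquadratic costs): at that single point one must rely on the Yamada--Watanabe modulus rather than classical Picard iteration, and some care is needed to verify that composing with the $C^2$ map $g$ preserves an admissible modulus of continuity.
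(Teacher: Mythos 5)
The paper does not actually write out a proof here: it states only that Lemma~\ref{lem:sde} ``follows the same procedure as the proof of Lemma 3.5 in~\cite{gonon2021asset}''. Your reconstruction has the right architecture --- the sign condition $xb(x)\le 0$ for the drift $b(x)=(G')^{-1}(g(x)/\lambda)$, a quadratic Lyapunov function for non-explosion, and the scale function for recurrence --- and two of the three steps are fine as written. For recurrence you do not even need the growth estimates~\eqref{conditions}: the sign condition alone gives $\int_0^y b(z)\,dz\le 0$ for every $y\in\R$, so the integrand defining $s$ is everywhere $\ge 1$, hence $|s(x)|\ge|x|$ and $s(\pm\infty)=\pm\infty$, and recurrence follows from the standard criterion for one-dimensional diffusions with non-degenerate diffusion coefficient.

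The one genuine gap is the pathwise uniqueness step. The Yamada--Watanabe conditions impose a modulus $\rho$ with $\int_{0+}\rho(u)^{-2}\,du=\infty$ on the \emph{diffusion} coefficient but an Osgood modulus $\kappa$ with $\int_{0+}\kappa(u)^{-1}\,du=\infty$ on the \emph{drift}; a H\"older modulus $u^\alpha$ with $\alpha<1$ satisfies the former but violates the latter, since $\int_{0+}u^{-\alpha}\,du<\infty$. So ``$b$ is at worst H\"older at the origin'' does not place you in the Yamada--Watanabe regime, and the obstacle you flag at the end of your proposal is real, not merely technical. The repair is cheap and uses structure you already have: $g$ is non-increasing by Lemma~\ref{ODE} and $(G')^{-1}$ is increasing (it is odd because $G$ is symmetric, and increasing because $G'$ is strictly increasing), so $b$ is non-increasing, i.e.\ one-sidedly Lipschitz with constant zero. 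For two solutions $X,Y$ driven by the same Brownian motion the difference $X-Y$ is absolutely continuous and $\frac{d}{dt}(X_t-Y_t)^2=2(X_t-Y_t)\bigl(b(X_t)-b(Y_t)\bigr)\le 0$, forcing $X\equiv Y$ with no regularity of $b$ at the origin whatsoever. Combining this pathwise uniqueness with weak existence up to the explosion time (continuous, locally bounded drift and constant non-degenerate diffusion) and your Lyapunov bound, the Yamada--Watanabe \emph{meta}-theorem (weak existence plus pathwise uniqueness implies strong existence) delivers the unique global strong solution; that is where the name Yamada--Watanabe should enter your argument, rather than as a modulus condition on $b$.
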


\begin{remark}
When $G(x) = x^2/2$ and $\bar{b}$, $\bar{a}$, $\sigma$ and $\Lambda=1$ are all constants, the solution to~\eqref{eqn:ergodic ODE} is $g(x;\gamma,\sigma,\bar{a},\lambda) = -\sqrt{\gamma\sigma^2 \lambda}x$, and the dynamics ~\eqref{eqn:ergodic SDE} becomes 
\begin{align}\label{eq:OU}
d\Delta_t = -\left(\sqrt{\frac{\gamma \sigma^2}{\lambda}}\Delta_t-\bar{b}\right)dt -\bar{a} dW_t, \qquad 
\Delta_0 = \varphi_{0-} +\frac{\xi_0}{\sigma}-\frac{\mu_0}{\gamma\sigma^ 2}. 
\end{align}
This is an Ornstein-Uhlenbeck process, which is mean-reverting. In general, the requirement of  $xg(x)\leq 0$ ensures that the dynamic~\eqref{eqn:ergodic SDE} is indeed mean-reverting and converges to an ergodic limit. 
\end{remark}

With these two ingredients on hand, we now present our first results in the following theorem:
\begin{theorem}\label{thm:main}
Let $g$ be the solution to~\eqref{eqn:ergodic ODE} and $\left(\Delta_t\right)_{t\geq 0}$ the solution  to~\eqref{eqn:ergodic SDE}. 
Then  under Assumption~\ref{cond:cost} and Assumption~\ref{assump:market}, 
  for all competing admissible strategies $\dot\psi$, we have 
\begin{align*}
 J_T(\dot\psi) \leq  J_T\left( (G')^{-1}\left(\frac{g(\Delta;\gamma,\sigma,\bar{a},\lambda\Lambda)}{\lambda\Lambda}\right) \right) + O\left(\frac{\sqrt{\lambda}}{T}\right) +O\left(\sqrt{\lambda}\right). 
\end{align*}
\end{theorem}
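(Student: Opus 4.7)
The plan is to recast the hedging problem as a pure tracking problem and deploy a verification argument based on the ergodic HJB equation whose spatial derivative is precisely~\eqref{eqn:ergodic ODE}. Using the frictionless identity $\mu = \gamma\sigma^2\bar\varphi + \gamma\sigma\xi$, I would first rewrite the integrand of $J_T$ as $-\tfrac{\gamma\sigma^2}{2}(\Delta\varphi_t)^2 - \lambda G(\dot\varphi_t)$ plus terms that are independent of $\dot\varphi$. The claim then reduces to showing that the tracking cost
\begin{equation*}
L_T(\dot\varphi) := \frac{1}{T}\E\left[\int_0^T \left(\frac{\gamma\sigma^2}{2}(\Delta\varphi_t)^2 + \lambda G(\dot\varphi_t)\right)dt\right]
\end{equation*}
is minimized up to $O(\sqrt\lambda/T)$ by the candidate rate $u^*(\tilde\Delta_t)$, where $u^*(x):=(G')^{-1}(g(x)/\lambda)$ and $\tilde\Delta$ is the ergodic diffusion of Lemma~\ref{lem:sde}.

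The central observation is that $V(x) := -\int_0^x g(y)\,dy$ (nonnegative by the sign condition $xg(x)\le 0$), together with some constant $\rho\ge 0$, solves the ergodic HJB
\begin{equation*}
\rho = \inf_{u\in\R}\Big\{\frac{\gamma\sigma^2}{2}x^2 + \lambda G(u) + u V'(x)\Big\} + \frac{\bar a^2}{2}V''(x),
\end{equation*}
and the infimum is attained at $u^*(x)$. Indeed, the Legendre identity $(G^*)' = (G')^{-1}$ turns the bracket on the right into $\tfrac{\gamma\sigma^2}{2}x^2 - \lambda G^*(g(x)/\lambda) - \tfrac{\bar a^2}{2}g'(x)$, whose $x$-derivative is exactly~\eqref{eqn:ergodic ODE}; hence the right-hand side is spatially constant. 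Applying It\^o's formula to $V(\tilde\Delta_t)$ along~\eqref{eqn:ergodic SDE}, taking expectation to kill the martingale, and using the equality case of the HJB at $u^*$ gives the identity
\begin{equation*}
L_T\!\left(u^*(\tilde\Delta)\right) = \rho - \frac{1}{T}\big(\E[V(\tilde\Delta_T)] - V(\tilde\Delta_0)\big).
\end{equation*}
Running the same It\^o computation along an arbitrary admissible $\dot\varphi$ and invoking the HJB inequality (the minimizer is no longer attained pointwise) yields
\begin{equation*}
L_T(\dot\varphi) \ge \rho - \frac{1}{T}\big(\E[V(\Delta\varphi_T)] - V(\Delta\varphi_0)\big).
\end{equation*}
Since the two forward processes share their initial condition, subtracting and specializing to the true optimizer $\dot\varphi^*$ gives
\begin{equation*}
L_T\!\left(u^*(\tilde\Delta)\right) - L_T(\dot\varphi^*) \;\le\; \frac{1}{T}\big(\E[V(\Delta\varphi^*_T)] - \E[V(\tilde\Delta_T)]\big).
\end{equation*}

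What remains is to prove that both terminal expectations are $O(\sqrt\lambda)$; this yields the $O(\sqrt\lambda/T)$ bound on $L_T$, and converting back to $J_T$ via the $\dot\varphi$-independent remainder finishes the argument. The candidate side is manageable: Lemma~\ref{lem:sde} supplies recurrence of $\tilde\Delta$, and the growth condition~\eqref{conditions}, read through $(G^*)' = (G')^{-1}$, shows $V(x)\lesssim \lambda G^*(\gamma\sigma^2 x^2/(2\lambda))$, which scales correctly under the invariant law. The real difficulty is $\E[V(\Delta\varphi^*_T)]$: without an a priori bound on the optimizer's terminal deviation, the HJB comparison only delivers information in the wrong direction. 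The plan is a bootstrap that combines the already-known estimate $L_T(\dot\varphi^*)\le L_T(u^*(\tilde\Delta)) = \rho + O(\lambda/T)$ with the transversality assumption~\eqref{assump:vanishing}, which under Assumption~\ref{assump:market} forces $\E[\varphi^*_T{}^2] = o(T^2/\lambda)$. A backup route, if that bootstrap proves too fragile for general $G$, is to compare $\dot\varphi^*$ with an explicit modification that drives $\Delta\varphi_T$ back toward zero over a terminal window of length $O(\sqrt\lambda)$ at cost $O(\sqrt\lambda/T)$, thereby replacing $\Delta\varphi^*_T$ with a controlled variable at negligible loss. Securing this a priori terminal estimate is the main technical obstacle of the argument.
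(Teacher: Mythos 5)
Your verification route via the ergodic HJB is internally consistent up to the last step: with $V(x)=-\int_0^x g(y)\,dy$ the right-hand side of your HJB is indeed spatially constant (its $x$-derivative vanishes precisely by~\eqref{eqn:ergodic ODE} and $(G^*)'=(G')^{-1}$), and the two It\^o computations correctly yield $L_T(u^*(\tilde\Delta))-L_T(\dot\varphi)\le \frac{1}{T}\E[V(\Delta\varphi_T)]$ after discarding the nonnegative term $\E[V(\tilde\Delta_T)]$. The genuine gap is exactly where you locate it, and neither of your proposed repairs closes it. Since $|g(x)|\le C_G\sqrt{\lambda}(\sqrt{\lambda}+\sqrt{\gamma\sigma^2}|x|)$, the potential $V$ is quadratic with leading coefficient of order $\sqrt{\gamma\sigma^2\lambda}$, so $\E[V(\Delta\varphi^*_T)]=O(\sqrt{\lambda})$ requires $\E[(\Delta\varphi^*_T)^2]=O(1)$. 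The transversality condition~\eqref{assump:vanishing} only gives $\E[\varphi_T^2]=o(T^2/\lambda)$, which after multiplying by $\sqrt{\lambda}/T$ leaves a term of order $o(T/\sqrt{\lambda})$ --- not even bounded. Your bootstrap cannot rescue this: the inequality $L_T(\dot\varphi^*)\ge \rho-\frac{1}{T}\E[V(\Delta\varphi^*_T)]$ combined with $L_T(\dot\varphi^*)\le\rho+O(\sqrt{\lambda}/T)$ only yields $\E[V(\Delta\varphi^*_T)]\ge -O(\sqrt{\lambda})$, which is vacuous because $V\ge 0$; the information runs in the wrong direction, as you half-suspect. The terminal-window modification is a viable backup in principle but is not executed, and for general $G$ it requires its own quantitative estimates.

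The paper avoids this obstacle by not routing the comparison through a potential of the competitor's state. It works directly with the goal functional: writing $\theta=\varphi-\hat\varphi$, the pointwise convexity inequality $G(u^*)-G(\dot\varphi)\le G'(u^*)(u^*-\dot\varphi)$ together with completing the square produces the bound $J_T(\dot\varphi)-J_T(u^*(\Delta))\le\frac{1}{T}\E[\int_0^T(-\frac{\gamma}{2}(\theta_t\sigma)^2-\gamma\sigma^2\theta_t\Delta_t-g(\Delta_t)\dot\theta_t)\,dt]$, and the ODE for $g$ makes $d[\theta_t g(\Delta_t)]$ absorb the two cross terms exactly. The resulting boundary term is $\frac{1}{T}\E[g(\Delta_T)\theta_T]$: the small factor $g$, which is $O(\sqrt{\lambda})$ in $L^2$, is evaluated at the \emph{candidate's} ergodic deviation $\Delta_T$ (whose moments are uniformly bounded), and it multiplies the competitor only \emph{linearly} through $\theta_T$, so Cauchy--Schwarz plus admissibility suffices. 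In your formulation the competitor enters quadratically through $V(\Delta\varphi^*_T)$ with no compensating small factor on the quadratic part beyond $\sqrt{\lambda}$, which is why the a priori terminal estimate becomes the bottleneck. If you want to keep the HJB framing, the fix is essentially to linearize: compare $V(\Delta\varphi_T)$ and $V(\tilde\Delta_T)$ through the cross term $g(\tilde\Delta_T)(\Delta\varphi_T-\tilde\Delta_T)$ rather than bounding each separately, which brings you back to the paper's integration-by-parts identity.
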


Theorem~\ref{thm:main} shows that,  under Assumption~\ref{assump:market} the smallness is not only an absolute quantity on $\sqrt{\lambda}$, but also  on the relative quantity $\sqrt{\lambda}/T$, i.e. the smallness of $\lambda$ should also be relative quantity with respect to the trading time horizon. 
Notice that the order of the smallness is derived as a coarse upper bound for every transaction costs function $G$ that satisfies Assumption~\ref{cond:cost}. In fact, with the specific form of the transaction costs $G$, we can have finer estimations, as in the example of quadratic costs shown in Corollary~\ref{quadratic case}. For the general power costs case and proportional costs case, we refer the reader to the discussion of Theorem 3.3 in~\cite{caye2017trading} and Theorem 4.2 in~\cite{gonon2021asset}.  

\subsection{Proof of Section~\ref{formal approximations}}\label{proof}

The proof of Lemma~\ref{ODE} follows the same procedure as the proof of Lemma 3.4 in~\cite{gonon2021asset}, and Lemma~\ref{lem:sde} follows the same procedure as the proof of Lemma 3.5 in~\cite{gonon2021asset}. 

Here we provide some auxiliary results on the function $g$ from~\eqref{eqn:ergodic ODE}.
\begin{corollary}\label{ode results}
Let $g$ be the solution to~\eqref{eqn:ergodic ODE} from Lemma~\ref{ODE}.
Then the following holds:
\begin{enumerate}
\item There exists a constant $C_G>0$ that only depends on $G$ such that for all $x\in\R$, 
\begin{align}\label{est:g}
|g(x;\gamma,\sigma,\bar{a},\lambda)| \leq C_G \sqrt{\lambda}\left(\sqrt{\lambda} + \sqrt{\gamma\sigma^2} |x|\right). 
\end{align}
\item There exists a constant $K_G>0$ that only depends on $G$ such that for all $x\in\R$,
\begin{align}\label{est:g'}
|g'(x;\gamma,\sigma,\bar{a},\lambda)| \leq \sqrt{\gamma\sigma^2\lambda} K_G. 
\end{align}
\end{enumerate}
\end{corollary}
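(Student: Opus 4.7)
The plan is to derive both bounds from a single integral identity for $g$. Since $g$ is odd by Lemma~\ref{ODE}, we have $g(0)=0$; integrating~\eqref{eqn:ergodic ODE} from $0$ to $x$ and using the change of variables $u=g(s)/\lambda$ together with the Legendre duality $(G^*)'=(G')^{-1}$ yields the key identity
$$\lambda G^*\!\bigl(g(x)/\lambda\bigr)+\frac{\bar{a}^2}{2}\bigl(g'(x)-g'(0)\bigr)=\frac{1}{2}\gamma\sigma^2 x^2.$$
This identity is the engine that drives both estimates, as it captures the exact balance between the convex potential $\lambda G^*$ and the diffusion coefficient $\bar{a}^2$.

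For part~(2) I would solve the identity for $g'(x)$ and show that the bracketed quantity $\tfrac12\gamma\sigma^2 x^2-\lambda G^*(g(x)/\lambda)$ is bounded on $\R$ at scale $\bar{a}^2\sqrt{\gamma\sigma^2\lambda}$. The growth condition~\eqref{conditions} from Lemma~\ref{ODE} pins down $g(x)/\lambda\sim-(G^*)^{-1}(\gamma\sigma^2 x^2/(2\lambda))$ as $|x|\to\infty$, which forces the bracketed quantity to converge to a finite limit; the limit identifies $g'(\infty)=-\sqrt{\gamma\sigma^2\lambda/C^{*}}$, where $C^{*}$ is the asymptotic quadratic upper bound on $G$ extracted from Assumption~\ref{cond:cost}(iii). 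Combined with a matching bound on $g'(0)$ obtained by matching this limit inside the identity, one obtains $|g'(x)|\leq K_G\sqrt{\gamma\sigma^2\lambda}$ uniformly in $x$, with $K_G$ depending only on $G$.

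For part~(1) I would combine the identity with the quadratic lower bound $G^*(y)\geq y^2/(4C)-c_G$, valid on $\R$ with constants depending only on $G$; this follows from the bound $G''(z)\leq C$ on $\{|z|>x_0\}$ in Assumption~\ref{cond:cost}(iii) via elementary Legendre duality. Using the part-(2) estimate to control the $\bar{a}^2(g'(x)-g'(0))/2$ term in the identity, rearranging and taking square roots gives $|g(x)|\leq C_G\sqrt{\lambda}(\sqrt{\lambda}+\sqrt{\gamma\sigma^2}|x|)$ after applying $\sqrt{a+b+c}\leq\sqrt{a}+\sqrt{b}+\sqrt{c}$ and absorbing pure-$G$ constants into $C_G$.

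The main obstacle I foresee is making the $\bar{a}$-independence of $K_G$ (and hence of $C_G$) fully rigorous: the ODE visibly contains $\bar{a}^2$, yet neither bound in the corollary does. The cleanest remedy is a rescaling $\tilde g(\tilde x):=g\bigl(\tilde x\sqrt{\lambda/(\gamma\sigma^2)}\bigr)/\lambda$ that collapses the parameter dependence to a single coefficient in front of $\tilde g''$, followed by a comparison (shooting) argument against the drift-free solution $\tilde g_0(\tilde x)=-(G^*)^{-1}(\tilde x^2/2)$, which is the unique monotone solution when $\bar{a}=0$. Establishing this comparison uniformly in the residual coefficient—so that the bound is stable as $\bar{a}$ ranges over $(0,\infty)$—is where the technical work lies; once it is done, the rest of the proof is book-keeping.
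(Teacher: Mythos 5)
First, note that the paper itself offers no proof of Corollary~\ref{ode results}: it is stated as an auxiliary consequence of Lemma~\ref{ODE}, whose proof is in turn delegated to Lemma 3.4 of \cite{gonon2021asset}, so your attempt can only be judged on its own terms. Your integral identity is correct and is the natural engine: since $g(0)=0$ and $(G^*)'=(G')^{-1}$, integrating \eqref{eqn:ergodic ODE} from $0$ to $x$ does give $\lambda G^*\bigl(g(x)/\lambda\bigr)+\tfrac{\bar a^2}{2}\bigl(g'(x)-g'(0)\bigr)=\tfrac12\gamma\sigma^2x^2$, and the quadratic lower bound $G^*(y)\ge y^2/(4C)-c_G$ extracted from Assumption~\ref{cond:cost}(iii) is also fine.

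The gap is in how you close the loop. For part (2) the argument is circular: the identity says precisely that $\tfrac12\gamma\sigma^2x^2-\lambda G^*(g(x)/\lambda)$ \emph{equals} $\tfrac{\bar a^2}{2}(g'(x)-g'(0))$, so asserting that this bracket is bounded at scale $\bar a^2\sqrt{\gamma\sigma^2\lambda}$ is a restatement of \eqref{est:g'}, not a proof of it. The growth condition \eqref{conditions} cannot supply the missing input, because it is only a multiplicative asymptotic; a relative error $o(1)$ in $g$ produces an additive error in $\lambda G^*(g/\lambda)$ that is merely $o(\gamma\sigma^2x^2)$, far too weak to bound the bracket by a constant. (Your claimed limit $g'(\infty)=-\sqrt{\gamma\sigma^2\lambda/C^*}$ is also false for power costs with $q<2$, where $|g(x)|\sim\lambda\bigl(p\gamma\sigma^2x^2/(2\lambda)\bigr)^{1/p}$ grows sublinearly and $g'(x)\to0$ at infinity.) You need an independent argument for \eqref{est:g'}, e.g.\ differentiating the ODE and running a maximum principle on $g'$, or the comparison argument you sketch at the end; and you correctly identify, but do not resolve, the real difficulty: after your rescaling the equation retains the coefficient $\tfrac{\bar a^2}{2}\sqrt{\gamma\sigma^2/\lambda}$ in front of $\tilde g''$, and the drift-free profile has $\tilde g_0'(0)=-\infty$ for $q<2$, so uniformity of $K_G$ in that coefficient is exactly the nontrivial content of the statement. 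Part (1) inherits a related defect: controlling $\tfrac{\bar a^2}{2}|g'(x)-g'(0)|$ via \eqref{est:g'} leaves an additive term of order $\bar a^2\sqrt{\gamma\sigma^2\lambda}$ in the bound for $\lambda G^*(g/\lambda)$, which is not dominated by $\lambda+\gamma\sigma^2x^2$ uniformly in $\bar a$ and therefore cannot be absorbed into \eqref{est:g} with $C_G$ depending only on $G$. The clean repair is to prove the sign $g'(x)\ge g'(0)$ for all $x$ (equivalently, that $|g|$ is dominated by the drift-free profile), which makes the diffusion correction in your identity favourable and yields $\lambda G^*(g(x)/\lambda)\le\tfrac12\gamma\sigma^2x^2$ directly; as written, your proposal does not establish this.
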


\begin{corollary}\label{sde results}
Let $g$ be the solution to~\eqref{eqn:ergodic ODE} from Lemma~\ref{ODE} and let the process $\Delta$ be the strong solution to~\eqref{eqn:ergodic SDE} from Lemma~\ref{lem:sde}. 
\begin{enumerate}
\item We have the following uniform moment bounds
\begin{align}\label{uniform bound on moments}
\sup_{T\geq 0}\E\left[ |\Delta_T|^k\right] <\infty, \qquad \mbox{for all} \quad k\in\mathbb{N}.
\end{align}
\item There exists $M>0$, such that for an arbitrary process $X$ with dynamic 
$$
dX_t = \mu_t^X dt + \sigma_t^X dW_t,
$$
the following inequality holds a.s.:
\end{enumerate}
\begin{align}\label{ass saver}
&\Big|
 \int_0^T \left(\gamma\sigma_t^2\Delta_tX_t+\mu_t^Xg(\Delta_t;\gamma,\sigma_t,\bar{a}_t,\lambda\Lambda_t) + \sigma^X_t\bar{a}_t g'(\Delta_t;\gamma,\sigma_t,\bar{a}_t,\lambda\Lambda_t) \right)dt 
 \notag\\
&\qquad + \int_0^TX_t\bar{a}_t g'(\Delta_t;\gamma,\sigma_t,\bar{a}_t,\lambda\Lambda_t)dW_t -g(\Delta_T;\gamma,\sigma_T,\bar{a}_T,\lambda\Lambda_T) X_T
\Big| 
\leq \sqrt{\lambda}M\int_0^T |X_t| dt. 
\end{align}
\end{corollary}
\vspace{10pt}
\begin{proof}[Proof of Theorem~\ref{thm:main}]
With the strategy,  we write
$$
\hat\varphi_t 
= \varphi_{0-} + \int_0^t \left(G'\right)^{-1}\left(\frac{g(\Delta_u;\gamma,\sigma_u,\bar{a}_u,\lambda\Lambda_u)}{\lambda\Lambda_u}\right)\ du 
= \varphi_{0-} + \bar\varphi_t  +\Delta_t - \left(\frac{\mu_0}{\gamma\sigma^ 2} - \frac{\xi_0}{\sigma} +\Delta_0\right) =  \bar\varphi_t + \Delta_t ,
$$
hence with~\eqref{eq:strat}, we have 
\begin{align}\label{eq:1}
\gamma\sigma^2\Delta_t = \gamma\sigma^2\left(\hat\varphi_t - \bar\varphi_t\right) = \gamma\sigma\left(\sigma\hat\varphi_t +\xi_t\right)- {\mu_t}.
\end{align}
Consider a competing admissible strategy $\psi$ and, to ease notation, set 
\begin{align*}
\dot{\theta}_t = \dot{\psi}_t -\left(G'\right)^{-1}\left(\frac{g(\Delta_t;\gamma,\sigma_t,\bar{a}_t,\lambda\Lambda_t)}{\lambda\Lambda_t}\right), 
\end{align*}
hence
\begin{align*}
\theta_t = \int_0^t \dot{\psi}_u - \left(G'\right)^{-1}\left(\frac{g(\Delta_u;\gamma,\sigma_u,\bar{a}_u,\lambda\Lambda_u)}{\lambda\Lambda_u}\right) du =  \psi_t - \hat\varphi_t. 
\end{align*}
Equation~\eqref{eq:1} and the convexity of $G$ yield
\begin{align}
&\qquad J_T(\dot{\psi}) - J_T \left( (G')^{-1}\left(\frac{g(\Delta;\gamma,\sigma,\bar{a},\lambda\Lambda)}{\lambda\Lambda}\right) \right)
\notag\\&= \frac{1}{T} \E\left[ \int_0^T\theta_t \mu_t - 
\frac{\gamma}{2}\theta_t (\psi_t \sigma_t+ \hat\varphi_t\sigma_t +2\xi_t)\sigma_t
+\lambda\Lambda_t\left(G\left(\left(G'\right)^{-1}\left(\frac{g(\Delta_t;\gamma,\sigma_t,\bar{a}_t,\lambda\Lambda_t)}{\lambda\Lambda_t}\right)\right)-G(\dot{\psi}_t) \right)\; dt \right] \notag\\
&\leq \frac{1}{T}\E\left[ \int_0^T-\frac1 2 \gamma \left(\theta_t  \sigma_t\right)^2 + \theta_t \big(\mu_t - \gamma(\hat\varphi_t\sigma_t + \xi_t)\sigma_t\big)+\lambda\Lambda_t G'\left(\left(G'\right)^{-1}\left(\frac{g(\Delta_t;\gamma,\sigma_t,\bar{a}_t,\lambda\Lambda_t)}{\lambda\Lambda_t}\right)\right)  \dot\theta_t \; dt\right] \notag\\
&= \frac{1}{T}\E\left[ \int_0^T-\frac1 2 \gamma \left(\theta_t \sigma_t \right)^2 - \gamma \theta_t \sigma^2_t\Delta_t  - g(\Delta_t;\gamma,\sigma_t,\bar{a}_t,\lambda\Lambda_t)\dot{\theta}_t  \; dt\right]. \label{eq:comp}
\end{align}
We now analyze the terms on the right-hand side of~\eqref{eq:comp}. The inequality~\eqref{ass saver} from Lemma~\ref{sde results}
in turn yields
\begin{align}\label{eq:IBP}
&\E\left[\int_0^T \left(\gamma \theta_t \sigma^2_t\Delta_t  + g(\Delta_t;\gamma,\sigma_t,\bar{a}_t,\lambda\Lambda_t)\dot{\theta}_t\right)dt\right]
\notag\\&\qquad\qquad\qquad\qquad
\geq \E[g(\Delta_T;\gamma,\sigma_T,\bar{a}_T,\lambda\Lambda_T)\theta_T] - \sqrt{\lambda} M \E\left[\int_0^T |\theta_t|dt\right]
\end{align}
Here, the local martingale part is a true martingale. Indeed, by H\"older's inequality, the integrability condition  $\varphi\sigma\in\mathbb{H}^2$ and the boundedness of $g'$ established in Corollary~\ref{ode results},
\begin{align*}
\E\left[ \int_0^t |\theta_u g'(\Delta_u;\gamma,\sigma_u,\bar{a}_u,\lambda\Lambda_u)|^2 du \right] 
&\leq \gamma\lambda K_G^2\E\left[ \int_0^t\sigma^2_u \theta_u^{2}du \right]<\infty.
\end{align*}
Also taking into account that 
$$
\left| g(\Delta_t;\gamma,\sigma_t,\bar{a}_t,\lambda\Lambda_t) \right| \leq \sqrt{\gamma\sigma^2_t\lambda\Lambda_t} C_G |\Delta_t| + \lambda\Lambda_t C_G,
$$
we can therefore use~\eqref{eq:IBP} to replace the second and the third terms on the right-hand side of~\eqref{eq:comp}, obtaining
\begin{align*}
&J_T(\dot{\psi}) - J_T \left( (G')^{-1}\left(\frac{g(\Delta;\gamma,\sigma,\bar{a},\lambda\Lambda)}{\lambda\Lambda}\right) \right)
\\&\qquad\qquad\qquad\qquad\leq - \frac{1}{T}\E[g(\Delta_T;\gamma,\sigma_T,\bar{a}_T,\lambda\Lambda_T)\theta_T] - \E\left[ \int_0^T \frac{\gamma}{2} \left(\theta_t\sigma_t\right)^2 dt \right] + \frac{\sqrt{\lambda}M}{T}\int_0^T \E[|\theta_t|]dt.
\end{align*}
The Cauchy-Schwartz inequality yields
\begin{align*}
\big|\E[g(\Delta_T;\gamma,\sigma_T,\bar{a}_T,\lambda\Lambda_T)\theta_T]\big|
&\leq \left(\E[g(\Delta_T;\gamma,\sigma_T,\bar{a}_T,\lambda\Lambda_T)^2]\E[\theta_T^2]\right)^{1/2} 
\\& \leq \left(\E[2g(\Delta_T;\gamma,\sigma_T,\bar{a}_T,\lambda\Lambda_T)^2](\E[(\hat\varphi_T)^2] + \E[(\psi_T)^2])\right)^{1/2}
\\&\leq 2C_G \sqrt{\lambda} \left(\E[(\hat\varphi_T)^2] + \E[(\psi_T)^2]\right)^{1/2} \left( \gamma\sigma^2  \E[|\Delta_t|^2]+  \lambda\right)^{1/2}. 
\end{align*}
Moreover, it follows that 
$$
\frac{1}{T}\E[|\hat\varphi_T|^2] =\frac{2}{T}\left(\E[|\bar\varphi_T|^2] + \E[|\Delta_T|^2]\right) \leq 2\sup_{T>0} \frac{1}{T}\left(\E[|\bar\varphi_T|^2] + \E[|\Delta_T|^2]\right) <\infty. 
$$
 Together with the transversality condition~\eqref{assump:vanishing}, it follows that
\begin{align*}
\frac{1}{T} \left|\E[g(\Delta_T;\gamma,\sigma_T,\bar{a}_T,\lambda\Lambda_T)\theta_T] \right|
\leq  \frac{2C_G \sqrt{\lambda}}{T}\left(\E[(\hat\varphi_T)^2] + \E[(\psi_T)^2]\right)^{1/2} \left( \gamma\sigma^2  \E[|\Delta_t|^2]+  \lambda \right)
= O\left(\frac{\sqrt{\lambda}}{T}\right),
\end{align*}
and again by H\"older's inequality, 
$$
\frac{1}{T}\E\left[\int_0^T |\theta_t| dt \right] \leq \left(\frac{1}{T} \E\left[\int_0^T\theta_t^2 dt\right]\right)^{1/2} 
\leq 2\left(\sup_{T>0} \frac{1}{T}\E\left[\int_0^T \hat\varphi_t^2 + \psi_t^2 dt\right]  \right)^{1/2}. 
$$
Therefore, the trading rate $\dot{\varphi}$ is indeed asymptotically optimal as:
\begin{align*}
 &J_T(\dot{\psi}) - J_T \left( (G')^{-1}\left(\frac{g(\Delta;\gamma,\sigma,\bar{a},\lambda\Lambda)}{\lambda\Lambda}\right) \right)
\\&\qquad\qquad\qquad\leq  - \E\left[ \int_0^T \frac{\gamma}{2} \left(\theta_t\sigma_t\right)^2 dt \right]- \frac{1}{T}\E[g(\Delta_T;\gamma,\sigma_T,\bar{a}_T,\lambda\Lambda_T)\theta_T]  + \frac{\sqrt{\lambda}M}{T}\int_0^T \E[|\theta_t|]dt\\
&\qquad\qquad\qquad=- \E\left[ \int_0^T \frac{\gamma}{2} \left(\theta_t\sigma_t\right)^2 dt \right]+O\left(\frac{\sqrt{\lambda}}{T}\right) +O\left(\sqrt{\lambda}\right)\\
&\qquad\qquad\qquad\leq
O\left(\frac{\sqrt{\lambda}}{T}\right) +O\left(\sqrt{\lambda}\right). 
\end{align*}
\end{proof}

\section{Asymptotic Results for Quadratic Costs}\label{sec:asymptotic quadratic}
When the transaction costs is considered to be quadratic, i.e. $G(x) = x^2/2$, we have a linear function as $\left(G'\right)^{-1} (x) = x$. Hence the forward equation~\eqref{eq:dphidyn} becomes \emph{linear} with respect to the backward component $Y$, and the existence and uniqueness can be established as in~\cite{delarue2002existence,kohlmann.tang.02}, provided the coefficients satisfies certain regularities. However, for general function $G$ satisfying Assumption~\eqref{cond:cost},  the generator for the forward component is not globally Lipschitz hence no general theory is available for the FBSDE system~\eqref{eq:dphidyn} - \eqref{eq:BSDEY}. 

\subsection{A Concrete Example}\label{ss: exact}

As already emphasized above, a general existence proof for the FBSDE system~\eqref{eq:dphidyn} - \eqref{eq:BSDEY} remains a challenging open problem.   Let us just briefly sketch how the nonlinear FBSDE~\eqref{eq:dphidyn} - \eqref{eq:BSDEY} reduces to a nonlinear PDE, 
with the following assumptions on the market:
\begin{assumption}\label{assump:quadratic}
\begin{enumerate}[label=(\roman*)]
\item the frictionless strategy satisfies that $\bar b_t=0$ and $\bar a_t = \bar a$;
\item the volatility process of the stock price remain constant $\sigma>0$ in the models with and without transaction costs;
\item the cost parameter is constant ($\Lambda=1$).
\end{enumerate}
\end{assumption}

Under Assumption~\ref{assump:quadratic}, the forward-backward system~\eqref{eq:dphidyn} - \eqref{eq:BSDEY} in turn becomes autonomous, 
\begin{align}
&d\Delta\varphi_t = (G')^{-1}\left(\frac{Y_t}{\lambda}\right) dt - \bar{a}dW_t, &\Delta\varphi_0 &= \varphi_{0-} +\frac{\xi_0}{\sigma}-\frac{\mu_0}{\gamma\sigma^ 2}, \label{eq:fwd}\\
&dY_t = \gamma\sigma^2\Delta\varphi_t dt + Z_t dW_t,  &Y_T &= 0. \label{eq:bwd2}
\end{align}

When the transaction costs is quadratic, i.e. $ G(x) = x^2/2$, we can create the optimal strategy explicitly. Accordingly, the approximation can be made more precise, and we summarize the results as follows:  
\begin{corollary}\label{quadratic case}
For $G(x) = x^2/2$, define the  strategy
\begin{align}\label{optimal: quadratic}
\dot\varphi_t = - \sqrt{\frac{\gamma\sigma^2}{\lambda}}\tanh\Delta\varphi_t. 
\end{align}
Then under Assumption~\ref{assump:quadratic},   for all competing admissible strategies $\dot\psi$, we have 
\begin{align*}
\sup_{\dot\psi} J_T(\dot\psi) = J_T(\dot\varphi) 
=  J_T\left( -\sqrt{\frac{\gamma\sigma^2}{\lambda}} \Delta \right) + O\left(\frac{\lambda}{T}\right)
=  J_T\left( -\sqrt{\frac{\gamma\sigma^2}{\lambda}} \Delta \right) + o\left(\frac{\sqrt{\lambda}}{T}\right),
\end{align*}
where $\Delta$ is the following Ornstein–Uhlenbeck process:
$$
d\Delta_t = -\sqrt{\frac{\gamma \sigma^2}{\lambda}}\Delta_tdt -\bar{a} dW_t, \qquad 
\Delta_0 = \varphi_{0-} +\frac{\xi_0}{\sigma}-\frac{\mu_0}{\gamma\sigma^ 2}. 
$$
\end{corollary}

\begin{remark}
The optimality of  $\dot\varphi$ by~\eqref{optimal: quadratic} is derived in Theorem 4.5~\cite{muhlekarbe2021equilibrium}. The approximation order is a straight comparison between the candidate trading rate given by $-\sqrt{{\gamma \sigma^2}/{\lambda}}\ \Delta$ and the optimal trading rate $\dot\varphi$. 

When there is no liquidity risk, i.e. when $\Lambda=1$ throughout the trading horizon, the smallness assumption on the liquidity $\lambda$ is purely a relative quantity, comparing to the trading horizon $T$. Here with quadratic trading costs, the approximation is $\lambda/T$, which is finer than the overall approximation $\sqrt{\lambda}/T$. 
\end{remark}

\end{document}